\newtheorem{proposition}{Proposition}
\newtheorem{theorem}{Theorem}
\newtheorem{lemma}{Lemma}
\newcommand{\superscript}[1]{\ensuremath{^{\textrm{#1}}}}
\def\sharedaffiliation{\end{tabular}\newline\begin{tabular}{c}}
\def\upmc{\superscript{\dag}}
\def\thales{\superscript{\ddag}}
\newcommand{\G}{\mathcal{G}}
\newcommand{\R}{\mathcal{R}}
\newcommand{\Rb}{\bar{\R}}
\newcommand{\T}{\mathcal{T}}
\newcommand{\J}{\mathcal{J}}
\newcommand{\Low}{\mathcal{L}}
\newcommand{\Lb}{\bar{\Low}}
\newcommand{\Upp}{\mathcal{U}}
\newcommand{\N}{\mathbb{N}}
\newcommand{\X}{\mathcal{X}}
\newcommand{\D}{\mathcal{D}}
\title{Temporal Reachability Graphs
}
\author{
John Whitbeck\upmc\thales\\
\email{john.whitbeck@lip6.fr}
\alignauthor
Marcelo Dias de Amorim\upmc\\
\email{marcelo.amorim@lip6.fr}
\alignauthor
Vania Conan\thales\\
\email{vania.conan@fr.thalesgroup.com}
\and
\alignauthor
\alignauthor
Jean-Loup Guillaume\upmc\\
\email{\mbox{jean-loup.guillaume@lip6.fr}}
\alignauthor
\sharedaffiliation
  \begin{tabular}{ccc}
    \affaddr{{\upmc}UPMC Sorbonne Universit{\'e}s} & & \affaddr{{\thales}Thales Communications \& Security}\\
    \affaddr{4, place Jussieu}                     & & \affaddr{160 Boulevard de Valmy}\\
    \affaddr{75005 Paris}                          & & \affaddr{92700 Colombes} \\
    \affaddr{France}                               & & \affaddr{France} \\
  \end{tabular}
}
\begin{document}

\conferenceinfo{MobiCom'12,} {August 22--26, 2012, Istanbul, Turkey.}
\CopyrightYear{2012}
\crdata{978-1-4503-1159-5/12/08}
\clubpenalty=10000
\widowpenalty = 10000

\maketitle

\begin{abstract}
While a natural fit for modeling and understanding mobile networks, time-varying graphs remain poorly understood. Indeed, many of the usual concepts of static graphs have no obvious counterpart in time-varying ones. In this paper, we introduce the notion of \emph{temporal reachability graphs}. A \textit{($\tau,\delta$)-reachability graph} is a time-varying directed graph derived from an existing connectivity graph. An edge exists from one node to another in the reachability graph at time $t$ if there exists a journey (i.e., a spatiotemporal path) in the connectivity graph from the first node to the second, leaving after $t$, with a positive edge traversal time $\tau$, and arriving within a maximum delay $\delta$. We make three contributions. First, we develop the theoretical framework around temporal reachability graphs. Second, we harness our theoretical findings to propose an algorithm for their efficient computation. Finally, we demonstrate the analytic power of the temporal reachability graph concept by applying it to synthetic and real-life datasets. On top of defining clear upper bounds on communication capabilities, reachability graphs highlight asymmetric communication opportunities and offloading potential.
\end{abstract}

\category{C.2.1}{Computer-Communication Networks}{Network Design and Architecture}[Store and forward networks]
\category{F.2.2}{Analysis of Algorithms and Problem Complexity}{Nonnumerical Algorithms and Problems}
\category{G.2.2}{Discrete Mathematics}{Graph Theory}

\terms{Algorithms, Theory, Performance}

\keywords{Time-varying graphs, Reachability, Opportunistic networks, Communication performance bounds}

\section{Introduction}

In time-varying graphs (TVG), vertices and edges appear and disappear as a function of time. Alternatively called temporal networks~\cite{Kempe2000} or evolving graphs~\cite{Xuan2003}, time-varying graphs have emerged over the past few years as a key model for a variety of complex systems~\cite{Dorogovtsev02,Harary97,temporal_graphs,Wagner01,Watts98}. Time-varying graphs can also serve as a solid theoretical framework for investigating fundamental properties of mobile networks~\cite{Casteigts2011,pellegrini07}. In Fig.~\ref{fig:tvg}, we show an example of a time-varying graph representing a five-node mobile network. 

\begin{figure}
  \centering
  \scalebox{1}{\includegraphics{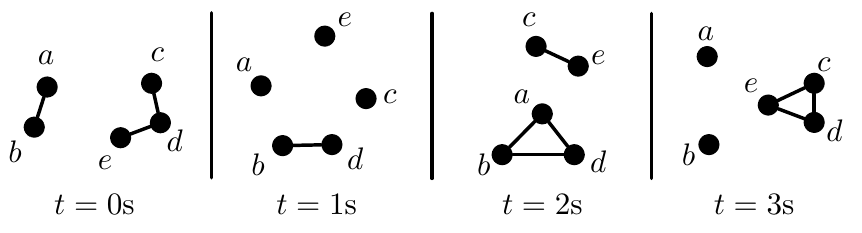}}
  \caption{Snapshots of a time-varying graph with five nodes.\label{fig:tvg}}
\end{figure}

Many of the typical concepts of static graphs such as paths, distance, diameter, or node degree have no obvious counterpart in time-varying graphs. Theorems that are true on static graphs may not hold in time-varying ones~\cite{Kempe2000} and dynamic equivalents of well-known problems on static graphs, such as finding strongly connected components, turn out to be intractable~\cite{Bhadra2002}. Naturally, it is always possible to track the evolution of static metrics such as node degree or clustering on snapshots of the time-varying graph at certain time intervals~\cite{Clauset2007} but, while instructive, this approach often fails to capture the correlations between successive snapshots~\cite{Holme2005}. Recently a number of concepts specific to time-varying graphs have been studied such as \emph{journeys}~\cite{Xuan2003}, \emph{temporal diameter}~\cite{chaintreau_diam}, or \emph{reachability time}~\cite{Holme2005}.

In the context of opportunistic mobile networking, much attention has been focused on aggregate inter-contact time distributions. These have been found to fit a power-law (with an $\alpha$ parameter smaller than 1) followed by an exponential cutoff in a number of real-life datasets~\cite{Chaintreau2007}. Should this power-law also hold for pairwise inter-contact times, then the expected values of these would be infinite, a pessimistic result indeed. However, further work has shown that aggregate power-law distributions can emerge from a diversity of pairwise inter-contact laws (including exponential)~\cite{Conan2007,Passarella2011}. Moreover, all information on multi-hop connectivity is ignored. Inter-contact time measurements do not therefore suffice to characterize connectivity in mobile networks.

\emph{Temporal reachability graphs} (TRG), our contribution, on the other hand, offer an immediate view of communication possibilities in a dynamic network. Given a time-varying graph $\mathcal{G}$, an arc exists from vertex $a$ to vertex $b$ at time $t$ in its derived \textit{($\tau,\delta$)-reachability graph} if a journey exists in $\mathcal{G}$ from $a$ to $b$ leaving $a$ after time $t$ and arriving at $b$ before $t+\delta$, given that each single-hop communication takes time $\tau$. Here, $\tau$ is the edge traversal time and $\delta$ is the upper bound on journey times in $\mathcal{G}$. By definition, temporal reachability are \emph{directed} time-varying graphs. For example, in the context of studying information dissemination in delay-tolerant networks (DTN)~\cite{dtn_fall_sigcomm}, $\tau$ would be the one-hop transmission delay and $\delta$ is the maximum tolerated delay.  

\begin{figure}
  \centering
  \scalebox{1}{\includegraphics{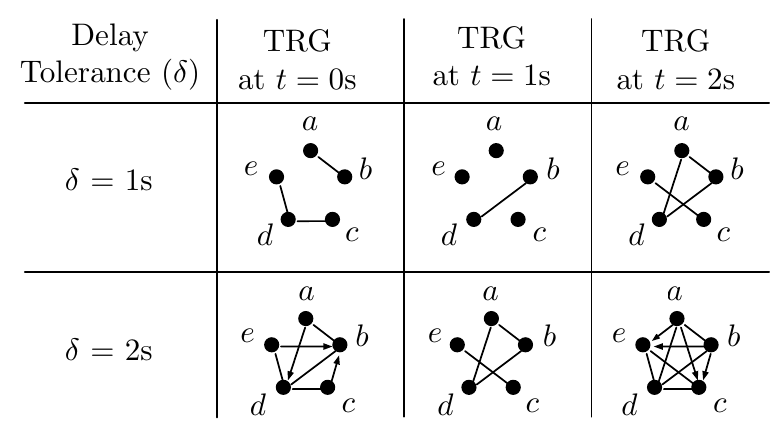}}
  \caption{Temporal reachability graphs obtained from the time-varying graph of Fig.~\ref{fig:tvg} for $\delta$=1s and 2s. Arrows indicate uni-directional communication opportunisties, while their absence indicates bi-directional ones. We show snapshots for different times $t$. We assume in this example that one-hop communications require $\tau$=1s.\label{fig:trg}}
\end{figure}

In Fig.~\ref{fig:trg}, we show the instances of the temporal-reachability graph extracted from the opportunistic mobile networks of Fig.~\ref{fig:tvg} at different times (assuming that the graph does not change between the snapshots shown in Fig.~\ref{fig:tvg}). We consider in this example that the link traversal time is $\tau$=1s. When $\delta=1$s, the only possible journeys in $\G$ at $t\in \{1s, 2s, 3s\}$ are the direct links between nodes~-- communications tolerate a delay of 1s, which is the time to traverse one single link.  When $\delta=2$s, several more journeys appear. Consider for instance the reachability graph at $t=0$s. There is a directed link between $a$ and $d$ as $a$ can wait one time unit for the link $(b,d)$ to appear ($a \leadsto b$ and $b \leadsto d$); thus, a two-hop journey $a \leadsto d$ through $b$ becomes possible. 

We make three main contributions:

\medskip\noindent\textbf{Temporal reachability graphs.} We formalize the concept of reachability graphs. In particular, under conditions that hold for all empirical datasets, we prove an additive property on the delay of reachability graphs. Roughly speaking, knowledge of the reachability graphs for delays $\delta$ and $\mu$ is enough to derive the reachability graph for delay $\delta+\mu$.

\medskip\noindent\textbf{Algorithm for efficient computation of TRG.} We translate this additive property into an efficient single-pass streaming algorithm~\cite{Alon1996} to calculate entire families of reachability graphs.

\medskip\noindent\textbf{Insights into real-world mobility traces.} Once calculated, reachability graphs yield many original insights on the temporal structure of the time-varying graphs they derive from. In this paper, we calculate the reachability graphs of several synthetic and real-life high-resolution connectivity traces~\cite{leboudec05,Musolesi2006,Salathe2010,Tournoux2011}. These highlight the concepts of \emph{temporal connectivity}, \emph{temporal asymmetry}, and \emph{temporal dominating set}. From the point of view of opportunistic communications, reachability graphs immediately provide, at all times, the maximum delivery ratio and the minimum set of users required to broadcast a message to the entire network, thereby offering a complete view of opportunistic communication capabilities and offloading potential.

\medskip The rest of the paper is structured as follows. Sections~\ref{sec:tvg} and~\ref{sec:temporal_reachability_graphs} introduce the concepts of time-varying and reachability graphs. Their theoretical properties justify their computation by a streaming algorithm detailed in Section~\ref{sec:algorithm}. Section~\ref{sec:results} then analyzes the reachability graphs of synthetic and real-life connectivity graphs. Related work is presented in Section~\ref{sec:related}. Finally, Section~\ref{sec:conclusion} concludes.

\section{Time-varying graphs}
\label{sec:tvg}

Time-varying graphs are a very useful high-level abstraction for studying connectivity over time in mobile network. Terms, notations, and definitions around time-varying graphs vary considerably, but recently, Casteigts et al. have proposed a unified framework for TVGs and, wherever applicable, we will use their definitions and notations~\cite{Casteigts2011}.

\begin{table}
  \centering
  \caption{Notations used in this paper}
  {\small \begin{tabular}{l|l}
    \textbf{Notation} & \textbf{Meaning} \\
    \hline \hline
    $N$ & Number of vertices \\ \hline
    $\G$ & A time-varying graph (TVG) \\ \hline
    $\tau$ & Edge traversal time of a TVG \\ \hline
    $\eta$ & Time step of an $\eta$-regular TVG \\ \hline
    \multirow{2}{*}{$\R_\delta$} & Reachability graph with maximum \\
    & delay $\delta$ derived from a TVG $\G$ \\ \hline
    $\| \R_\delta(t) \|$ & Number of arcs in $\R_\delta$ at time $t$ \\  \hline
    \multirow{2}{*}{$\Upp_\delta$, $\Low_\delta$}  & Upper and lower bounds of a reachability \\
    & graph $\R_\delta$. $\Low_\delta \subseteq \R_\delta \subseteq \Upp_\delta$. \\ \hline
    \multirow{2}{*}{$\D_\delta$} & A time-varying out-dominating set of a \\
    & reachability graph $\R_\delta$ \\ \hline
    \multirow{2}{*}{$\text{\textsf{sym}}(\R_\delta)$} & Symmetric subset of $\R_\delta$: $(u,v) \in$ \\
    & $\text{\textsf{sym}}(\R_\delta)(t) \Leftrightarrow (v,u) \in \text{\textsf{sym}}(\R_\delta)(t) $ \\ \hline
    \multirow{2}{*}{$\text{\textsf{asym}}(\R_\delta)$} & Asymmetric subset of $\R_\delta$: $(u,v) \in$ \\
    & $\text{\textsf{asym}}(\R_\delta)(t) \Leftrightarrow (v,u) \notin \text{\textsf{asym}}(\R_\delta)(t) $ \\
    \hline
  \end{tabular}}
  \label{table:notation}
\end{table}

\begin{definition} (\textbf{Time-varying graph})
\label{def:tvg}
Let $V$ be a set of vertices, and $E \subseteq V \times V$ the set of possible edges between vertices in $V$. Events occur over a time span $\T \subseteq \mathbb{T}$, where $\mathbb{T}$ is the temporal domain ($\N$ for discrete-time systems or $\mathbb{R}+$ for continuous-time ones). In the general case, a TVG is a tuple $\G = (V, E, \T, \rho, \zeta)$ where $\rho~: E \times \T \rightarrow \{0,1\}$, called \textit{presence function}, indicates whether a given edge exists at a given time and $\zeta~: E \times \T \rightarrow \mathbb{T}$, called \textit{latency function}, indicates the time it takes to cross a given edge if starting at a given date (the latency of an edge could vary in time).
\end{definition}

In this paper, we will mostly consider continuous-time TVGs (i.e., $\mathbb{T} = \mathbb{R}+$). Furthermore, we will always assume a \emph{constant} $\zeta$ function such that $\forall (e,t) \in E\times\T, \zeta(e,t) = \tau$ where $\tau \geq 0$ is our uniform edge traversal time. We also note $\G(t) \subseteq E$ the set of edges in the snapshot of $\G$ at time $t$. Hence $e \in \G(t) \Leftrightarrow \rho(e,t) = 1$. Finally, given an edge $e = (u,v) \in E$, we define $\text{\textsf{from}}(e) = u$ and $\text{\textsf{to}}(e) = v$.

\begin{definition} (\textbf{Inclusion})
\label{def:inclusion}
Let $\G$ and $\G'$ be two TVGs that differ only by their presence functions $\rho$ and $\rho'$. We write $\G \subseteq \G'$ if and only if $\forall t \in \T, \forall e \in E, \rho(e,t) \leq \rho'(e,t)$, or equivalently if and only if $\forall t \in \T, \G(t) \subseteq \G'(t)$.
\end{definition}

\begin{definition} (\textbf{Union})
\label{def:union}
Let $\G$, $\mathcal{H}$, and $\mathcal{I}$ be three TVGs that differ only by their presence functions. We write $\G = \mathcal{H} \cup \mathcal{I}$ if and only if $\forall t \in \T, \G(t) = \mathcal{H}(t) \cup \mathcal{I}(t)$.
\end{definition}

\begin{definition}(\textbf{Journey})
\label{def:journey}
A \emph{journey} in $\G$ is a sequence of couples $\J = \left\{ (e_1,t_1), (e_2,t_2), \dotsc, (e_k,t_k) \right\}$ such that $\forall i < k$: (i) $\text{\textsf{to}}(e_i) = \text{\textsf{from}}(e_{i+1})$; (ii) $\forall t\text{ s.t. }t_i \leq t < t_i + \tau, \rho(e_i,t) = 1$; and (iii) $t_{i+1} \geq t_i + \tau$. Here, $|\J| = k$ is this journey's \emph{topological length} (i.e., the number of hops). Furthermore, $\text{\textsf{departure}}(\J)$ and $\text{\textsf{arrival}}(\J)$ denote the starting date $t_1$ and the last date $t_k+\tau$, respectively. Finally $\delta_{\J} = \text{\textsf{arrival}}(\J) - \text{\textsf{departure}}(\J)$ is the journey's \emph{temporal length}. A journey may represent, for example, the sequence of hops that a message follows through an opportunistic network.
\end{definition}

\section{Temporal Reachability Graphs}
\label{sec:temporal_reachability_graphs}

Now that we have the necessary background, let us formally define the notion of \textit{temporal reachability graphs} (or simply \textit{reachability graphs} in the remainder of this paper, for the sake of readability). Furthermore, we write $\N^*$ for $\N \setminus \{0\}$. To the best of our knowledge, the definitions and results presented in this section are completely novel.

\subsection{Reachability Graphs}
\label{subsec:reachability_graphs}

\begin{definition}(\textbf{Reachability graph})
\label{def:reachability_graph}
For $\delta \in \mathbb{R}+$, let $\R_\delta$ be the reachability graph with maximum delay $\delta$ derived from $\G$ (with edge traversal time $\tau$). Formally, $\forall t$, $(u,v) \in \R_\delta(t)$ if and only if $u \neq v$ and there exists at time $t$ in $\G$ a journey $\J$ from $u$ to $v$ such that $\text{\textsf{departure}}(\J) \geq t$ and $\text{\textsf{arrival}}(\J) \leq t+\delta$.
\end{definition}

Note that if $\delta < \tau$ then $\R_\delta$ is empty, as even one-hop journeys do not have enough time to arrive before the maximum delay.

\begin{proposition}[\bfseries{\scshape{Growth}}]
\label{prop:growth}
Let $\R_\delta$ and $\R_\mu$ be two reachability graphs of $\G$. Then $\delta \leq \mu \implies \R_\delta \subseteq \R_\mu$.
\end{proposition}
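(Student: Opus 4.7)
The plan is to unfold the definition of the reachability graph at each time $t$ and show that the very same journey that witnesses an arc in $\R_\delta(t)$ also witnesses it in $\R_\mu(t)$. By Definition~\ref{def:inclusion}, the inclusion $\R_\delta \subseteq \R_\mu$ reduces to showing $\R_\delta(t) \subseteq \R_\mu(t)$ for every $t \in \T$, so I would fix an arbitrary $t$ and argue set-wise.

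First I would pick an arbitrary arc $(u,v) \in \R_\delta(t)$. By Definition~\ref{def:reachability_graph}, this gives me $u \neq v$ together with a journey $\J$ in $\G$ from $u$ to $v$ satisfying $\text{\textsf{departure}}(\J) \geq t$ and $\text{\textsf{arrival}}(\J) \leq t+\delta$. Then I would invoke the hypothesis $\delta \leq \mu$ to extend the arrival bound: $\text{\textsf{arrival}}(\J) \leq t+\delta \leq t+\mu$. Thus the same $\J$ satisfies the two temporal constraints required for membership in $\R_\mu(t)$, and since $u \neq v$ still holds, Definition~\ref{def:reachability_graph} yields $(u,v) \in \R_\mu(t)$. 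Since $t$ and $(u,v)$ were arbitrary, the inclusion follows.

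There is essentially no obstacle here: the proposition is a direct monotonicity consequence of a one-sided inequality in the definition, and no structural property of $\G$, of $\tau$, or of journeys needs to be re-derived. The only subtlety worth stating explicitly is that the witnessing journey need not be modified at all — one reuses it verbatim — so the argument goes through independently of whether $\mathbb{T}$ is discrete or continuous, and regardless of the value of $\tau$.
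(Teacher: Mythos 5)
Your proof is correct and matches what the paper intends: the paper gives no formal proof, stating only that the proposition ``follows naturally from the definition,'' and your argument --- reusing the witnessing journey verbatim and relaxing the arrival bound via $\delta \leq \mu$ --- is precisely that natural argument.
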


This follows naturally from the definition of a reachability graph. Note that the reverse is not true. For example let $\G$ be a TVG with edge traversal time $\tau > 0$. Its derived reachability graphs $\R_{\tau/2}$ and $\R_{\tau/3}$ are both empty as their maximum delays are smaller than the time it takes to cross one edge. Therefore, as per definition~\ref{def:inclusion}, $\R_{\tau/2} \subseteq \R_{\tau/3}$ even though $\frac{\tau}{2} > \frac{\tau}{3}$.

\begin{definition}(\textbf{Composition})
\label{def:composition}
Let $\R_\delta$ and $\R_\mu$ be two reachability graphs of $\G$. We define their \emph{composition} $\R_\delta \otimes R_\mu$ as the TVG such that, at all times $t$,
\begin{displaymath}
(u,v) \in \left(\R_\delta \otimes \R_\mu\right)(t) \Leftrightarrow
\left\{
  \begin{array}{l}
    (u,v) \in \R_\delta(t)\text{, or} \\
    (u,v) \in \R_\mu(t+\delta)\text{, or} \\
    \exists w \in V, (u,w) \in \R_\delta(t) \text{ and } \\
    \quad (w,v) \in \R_\mu(t+\delta)
  \end{array} \right.
\end{displaymath}
\end{definition}

\begin{theorem}[\bfseries{\scshape{Decomposition}}]
\label{th:decomposition}
Let $\G$ be a TVG with edge traversal time $\tau$ and $\Rb = \{\R_\delta\}_{\delta \geq 0}$ the set of all its reachability graphs. Let $\delta \geq 0$ and $\mu \geq \tau$. Then:
\begin{displaymath}
\R_{\delta+\mu} = \bigcup_{0 \leq \epsilon < 1} \R_{\delta+\epsilon\tau} \otimes \R_{\mu-\epsilon\tau}.
\end{displaymath}
\end{theorem}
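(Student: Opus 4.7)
The plan is to prove the two inclusions separately. For $\bigcup_{0\leq\epsilon<1}\R_{\delta+\epsilon\tau}\otimes\R_{\mu-\epsilon\tau}\subseteq\R_{\delta+\mu}$, I fix $\epsilon\in[0,1)$ and $(u,v)\in(\R_{\delta+\epsilon\tau}\otimes\R_{\mu-\epsilon\tau})(t)$ and dispatch on the three clauses of Definition~\ref{def:composition}. Clause~1 follows from Proposition~\ref{prop:growth} since $\delta+\epsilon\tau\leq\delta+\mu$. For clause~2 the witnessing journey in $\R_{\mu-\epsilon\tau}$ at time $t+\delta+\epsilon\tau$ has departure $\geq t+\delta+\epsilon\tau\geq t$ and arrival $\leq t+\delta+\mu$, which is exactly what $\R_{\delta+\mu}(t)$ requires. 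For clause~3 I concatenate the two witness journeys at $w$: the arrival of the first is $\leq t+\delta+\epsilon\tau \leq$ departure of the second, so condition~(iii) of Definition~\ref{def:journey} is met and the concatenation is a valid $u$-to-$v$ journey with departure $\geq t$ and arrival $\leq t+\delta+\mu$.

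For the reverse inclusion, I take $(u,v)\in\R_{\delta+\mu}(t)$ witnessed by a journey $\J=\{(e_1,t_1),\ldots,(e_k,t_k)\}$ with $t_1\geq t$ and $t_k+\tau\leq t+\delta+\mu$. The strategy is to pick a \emph{split time} $s\in[t+\delta,t+\delta+\tau)$ together with an index $j\in\{0,1,\ldots,k\}$ satisfying $t_j+\tau\leq s\leq t_{j+1}$, using the conventions $t_0+\tau=-\infty$ and $t_{k+1}=+\infty$. Setting $\epsilon=(s-t-\delta)/\tau\in[0,1)$: in the generic case $0<j<k$, the prefix $(e_1,t_1),\ldots,(e_j,t_j)$ witnesses $(u,w)\in\R_{\delta+\epsilon\tau}(t)$ and the suffix $(e_{j+1},t_{j+1}),\ldots,(e_k,t_k)$ witnesses $(w,v)\in\R_{\mu-\epsilon\tau}(t+\delta+\epsilon\tau)$ for $w=\text{\textsf{to}}(e_j)$, giving clause~3 of the composition; the degenerate cases $j=0$ and $j=k$ turn the whole journey into a single witness for clauses~2 and~1 respectively.

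The main obstacle is showing that such an $s$ always exists inside $[t+\delta,t+\delta+\tau)$. Let $F=\bigcup_{i=1}^{k}(t_i,t_i+\tau)$ be the set of \emph{forbidden} split times; condition~(iii) of Definition~\ref{def:journey} makes these open intervals pairwise disjoint. Suppose for contradiction that $[t+\delta,t+\delta+\tau)\subseteq F$. Then the left endpoint $t+\delta$ lies in some $(t_i,t_i+\tau)$, i.e., $t_i<t+\delta<t_i+\tau$, which yields $t+\delta<t_i+\tau<t+\delta+\tau$ and hence places $t_i+\tau$ inside $[t+\delta,t+\delta+\tau)$. But $t_i+\tau\notin(t_i,t_i+\tau)$ (being its right endpoint), and by the chain $t_{j}+\tau\leq t_{j+1}$ applied to every $j$, $t_i+\tau$ cannot lie in any other component of $F$ either: indices $j<i$ give $t_j+\tau\leq t_i<t_i+\tau$, while indices $j>i$ give $t_j\geq t_i+\tau$. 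This contradicts $[t+\delta,t+\delta+\tau)\subseteq F$, so the required $s$ exists and the two inclusions together complete the proof.
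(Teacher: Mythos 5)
Your proof is correct and follows essentially the same route as the paper's: the forward inclusion is the identical clause-by-clause check, and for the reverse inclusion the paper likewise splits the witnessing journey at the edge straddling $t+\delta$, choosing $\epsilon$ so that $t+\delta+\epsilon\tau$ equals that edge's arrival time --- which is exactly one of your admissible split times $s$. Your disjoint-forbidden-intervals argument is just a more symmetric packaging of the split-point existence that the paper obtains constructively via $i=\max\{j \mid t_j < t+\delta\}$ (the only caveat being that your interval $[t+\delta,\,t+\delta+\tau)$ degenerates when $\tau=0$, where the statement collapses to the single term $\epsilon=0$ and a one-line fix suffices).
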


While not yet in a calculable form, this theorem is at the heart of our approach. Intuitively, it states that if all the reachability graphs with delays close to $\delta$ and $\mu$ are known, then one can calculate the reachability graph with delay $\delta+\mu$. A full formal proof is provided in the appendix (Section~\ref{subsec:decomp_proof}), but intuitively, it relies on the following simple idea. Any arc in $\R_{\delta+\mu}$ corresponds to a journey in $\G$ starting after time $t$ and arriving before $t+\delta+\mu$. Either it can be neatly divided into a (possibly empty) journey in $\R_{\delta}$ starting after $t$ and arriving before $t+\delta$ and another (possibly empty) journey in $\R_{\mu}$ starting after $t+\delta$ and arriving before $t+\delta+\mu$, or it starts crossing an edge before $t+\delta$ but after $t+\delta-\tau$ that straddles both time intervals. In the latter case, since the edge traversal time is $\tau$, one can incorporate the straddling edge into a journey arriving before $t+\delta+\epsilon\tau$ and the rest into another journey leaving after $t+\delta+\epsilon\tau$ of temporal length less than $\mu-\epsilon\tau$.

\subsection{Regular reachability graphs}
\label{subsec:regular_reachability}

Real-life datasets all have a \emph{maximum resolution} (e.g., a second or a millisecond) that corresponds to the precision with which they were measured. While it is tempting to map their time domain $\T$ to $\N$, in reality one must account for $0$-second edge durations. For example, if the edge traversal time $\tau$ is null, then even ephemeral edges that last $0$ seconds may be part of a journey. In the more common situation where $\tau$ is strictly positive, one may still encounter $0$-duration arcs in a reachability graph. For example, consider an edge $(u,v)$ that is present for only one second. If $\tau$ and $\delta$ are both also equal to one second, then a one-hop journey from $u$ to $v$ using that edge only exists at precisely the instant $t$ that the edge appears. This, in turn, corresponds to two $0$-second arcs in $\R_\delta(t)$: $(u,v)$ and $(v,u)$. Note that both of these arcs will become one-second-long arcs in the reachability graph when $\delta=2s$. These observations lead to the definition of \emph{regular time-varying graphs}.

\begin{definition}(\textbf{Regular TVG})
\label{def:regular}
A TVG $\G$ is an $\eta$-regular TVG if there exists $\eta > 0$ such that $\forall k \in \N, k\eta < t_1 \leq t_2 < (k+1)\eta \implies \G(t_1) = \G(t_2) \subseteq \G(k\eta)$. Here, $\eta$ is called $\G$'s resolution. The time interval $[k\eta,(k+1)\eta[$ is the $k^{th}$ epoch of $\G$ with starting time $k\eta$ and ending time $(k+1)\eta$.
\end{definition}

Without loss of generality, this definition assumes that the first epoch starts at $t=0$. Intuitively, an $\eta$-regular TVG is one whose instantaneous graph topology cannot change arbitrarily quickly, as, during each epoch, the graph topology remains constant. Ephemeral $0$-second edges or arcs may exist at the start of an epoch, but the TVG then remains constant until the start of the next epoch. Not only are regular TVGs a natural fit for real-life traces, but, as we will see, they also represent a class of TVGs whose reachability graphs are calculable. 

\begin{theorem}[\bfseries{\scshape{Regular reachability graphs}}]
\label{th:regular_reachability}
Let $\G$ be an $\eta$-regular TVG whose edge traversal time is $\tau \in \eta\N^*$. For $\delta \in \eta\N$, let $\R_\delta$ be a reachability graph of $\G$. Then $\R_\delta$ is an $\eta$-regular TVG and $\forall k \in \N, k\eta < t < (k+1)\eta \implies \R_\delta(t) \subseteq \R_\delta(k\eta) \cap \R_\delta\left((k+1)\eta\right)$. 
\end{theorem}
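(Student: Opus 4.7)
The plan is to tag each journey $\J = \{(e_i,s_i)\}_{i=1}^K$ of $\G$ by a discrete epoch-level signature and then show that, because $\tau = p\eta$ and $\delta = q\eta$ are both multiples of $\eta$ (with $p \in \N^*$ and $q \in \N$), journeys with a given signature can be freely re-timed within their epochs. This single interchangeability property yields all three conclusions of the theorem in one stroke: constancy of $\R_\delta$ across an open epoch, and containment in $\R_\delta$ at both boundaries. Concretely, for each hop set $m_i = \lfloor s_i/\eta \rfloor$ and let $\alpha_i = 1$ when $s_i > m_i\eta$ (the hop starts in the interior of epoch $m_i$), $\alpha_i = 0$ when $s_i = m_i\eta$ (on a boundary).

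A short check using the $\eta$-regularity of $\G$ shows that $e_i \in \G(u)$ for every $u \in [m_i\eta,(m_i + p + \alpha_i)\eta)$: the admissibility of a hop depends only on the pair $(m_i,\alpha_i)$, not on the exact $s_i$. Similarly, $s_{i+1} \geq s_i + \tau$ forces $m_{i+1} \geq m_i + p$, strengthened to $m_{i+1} \geq m_i + p + 1$ whenever $\alpha_i = 1$ and $\alpha_{i+1} = 0$. The key lemma is then: for every $\epsilon \in [0,\eta)$, the sequence $s_i'(\epsilon) := m_i\eta + \alpha_i\epsilon$ on the same edges is a valid journey. Edge validity is immediate from the presence window, and $s_{i+1}'(\epsilon) - s_i'(\epsilon) \geq \tau$ reduces to a four-case check on $(\alpha_i,\alpha_{i+1})$ that combines the signature inequality with the strict bound $\epsilon < \eta$.

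Given the key lemma, the theorem is essentially bookkeeping. If $(u,v) \in \R_\delta(t)$ with $t \in (k\eta,(k+1)\eta)$ via a witness of signature $(m_i,\alpha_i)$, the constraints $s_1 \geq t > k\eta$ and $s_K + \tau \leq t + \delta < (k + 1 + q)\eta$ translate into $m_1 \geq k + (1 - \alpha_1)$ and $m_K + p \leq k + q$. Taking $\epsilon = 0$ gives a journey with departure $m_1\eta \geq k\eta$ and arrival $(m_K + p)\eta \leq k\eta + \delta$, so $(u,v) \in \R_\delta(k\eta)$. The mirror ``ceiling'' re-timing $s_i'' := (m_i + \alpha_i)\eta$ is valid by the same bookkeeping and yields $(u,v) \in \R_\delta((k+1)\eta)$. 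And for any $t_2 \in (k\eta,(k+1)\eta)$, taking $\epsilon = t_2 - k\eta \in (0,\eta)$ produces a journey with departure $\geq t_2$ and arrival $\leq t_2 + \delta$, so $\R_\delta(t) \subseteq \R_\delta(t_2)$, with equality following by symmetry.

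The main obstacle is the key lemma itself --- verifying that every interior/boundary pattern of hops remains admissible under each of the three re-timings. The case $\alpha_i = 1$, $\alpha_{i+1} = 0$ is the only one where the strict ``$+1$'' in the signature inequality is actually consumed, and it is consumed precisely because $\epsilon < \eta$; the asymmetry between the presence windows $[m_i\eta, (m_i + p)\eta)$ and $[m_i\eta, (m_i + p + 1)\eta)$ is what forces the interior-vs-boundary case split in the first place.
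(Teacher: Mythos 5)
Your proof is correct, and it reaches the conclusion by a genuinely different route from the paper's. The paper proves a one-hop nudging lemma (its Lemma~\ref{lemma:one_hop}, which is exactly your presence-window computation $e_i \in \G(u)$ for $u \in [m_i\eta,(m_i+p+\alpha_i)\eta)$) and then an ``epoch inclusion'' lemma that, to move the departure of a multi-hop journey from an interior time $t_a$ to another time $t_b$ in the same epoch, splits the journey at an extremal index and re-times only the prefix (or, for the arrival side, the suffix) onto a rigid $\tau$-spaced schedule anchored at $t_b$, keeping the remaining hops' original timestamps; the directions $t_a < t_b$ and $t_a > t_b$ are handled by two separate min/max constructions. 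You instead re-time \emph{every} hop at once to the canonical schedule $s_i'(\epsilon) = m_i\eta + \alpha_i\epsilon$, so that the floor ($\epsilon=0$), interior ($\epsilon = t_2 - k\eta$), and ceiling ($s_i''=(m_i+\alpha_i)\eta$) instances of a single parametrized family deliver all three containments, with the only delicate point being the $(\alpha_i,\alpha_{i+1})=(1,0)$ spacing case, which you correctly resolve via the strengthened bound $m_{i+1}\geq m_i+p+1$. Your floor re-timing is essentially the paper's Lemma~\ref{lemma:journeys_regular} (used there for Theorem~\ref{th:sampling}, not for this theorem); your contribution is to interpolate it with the $\alpha_i\epsilon$ offset so that one lemma covers what the paper does with two asymmetric case analyses. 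The paper's surgery avoids re-verifying the global spacing constraint for the untouched part of the journey, but your uniform signature argument is shorter, symmetric in the two epoch boundaries, and makes explicit why the hypothesis $\tau \in \eta\N^*$ is needed (it is what turns $s_{i+1}-s_i \geq \tau$ into the integer inequality $m_{i+1} \geq m_i + p$). Both arguments are sound; yours is the more unified presentation.
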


This theorem ensures that, as long as $\delta$ is a ``multiple'' of the resolution $\eta$, $\R_\delta$ carries on the $\eta$-regularity of $\G$. The full formal proof is a little technical (Section~\ref{subsec:proof_regular_reachability} in the appendix), but it relies on the following idea. Let us consider a one-hop journey in an $\eta$-regular TVG $\G$ leaving in the middle of an epoch at time $t$. Let $k$ be the integer such that $k\eta < t < (k+1)\eta$. It arrives in the next epoch at time $(k+1)\eta < t + \tau < (k+2)\eta$ (the theorem assumes that $\tau \geq \eta$). Because $\G$ is $\eta$-regular, the departure time of this one-journey can be nudged forwards or backwards as long as it remains with the same epoch. Hence all one-hop journeys with a departure time $t'$ such that $k\eta \leq t' \leq (k+1)\eta$ are also valid journeys in $\G$. This ``nudging'' can be extended, though not trivially, to multi-hop journeys which proves the theorem.

Furthermore, the $\tau\geq\eta$ hypothesis guarantees that an arc present in a reachability graph during an epoch is present not only at the start of the epoch like in any regular graphs both also at the start of the next epoch. This property will be leveraged later in this section. However $\R_\delta(t) \neq \R_\delta(k\eta) \cap \R_\delta\left((k+1)\eta\right)$. An example of such a situation is discussed in Section~\ref{subsec:approximation}.

\begin{theorem}[\bfseries{\scshape{Sampling}}]
\label{th:sampling}
Let $\G$ be an $\eta$-regular TVG whose edge traversal time is $\tau = n\eta$ with $n\in\N^*$. Let $\Rb = \{ \R_{i\eta} \}_{i\in\N}$ be the set of all its derived $\eta$-regular reachability graphs. For $(d,m) \in \N\times\N$ such that $m \geq n$, we have, $\forall a\in\N$:
\begin{displaymath}
\R_{(d+m)\eta}(a\eta) = \bigcup_{0 \leq k < n} \left( \R_{(d+k)\eta} \otimes \R_{(m-k)\eta} \right)(a\eta).
\end{displaymath}
\end{theorem}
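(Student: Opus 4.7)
My plan is to derive the Sampling theorem from the Decomposition theorem by showing that when the right-hand side of Decomposition is evaluated at an epoch boundary $a\eta$, the continuous union over $\epsilon \in [0,1)$ collapses to the desired finite discrete union. The easy direction $(\supseteq)$ comes first: for each $k \in \{0, \ldots, n-1\}$, any arc in $(\R_{(d+k)\eta} \otimes \R_{(m-k)\eta})(a\eta)$ witnesses (possibly by concatenation) a journey in $\G$ with departure $\geq a\eta$ and arrival $\leq (a+d+m)\eta$, so it lies in $\R_{(d+m)\eta}(a\eta)$; taking the union over $k$ preserves this containment.

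For the harder direction $(\subseteq)$, I apply Decomposition with $\delta = d\eta$ and $\mu = m\eta$ (the hypothesis $\mu \geq \tau$ is satisfied since $m \geq n$ and $\tau = n\eta$), which yields
\begin{displaymath}
\R_{(d+m)\eta}(a\eta) \subseteq \bigcup_{0 \leq \epsilon < 1} \left( \R_{d\eta + \epsilon\tau} \otimes \R_{m\eta - \epsilon\tau} \right)(a\eta).
\end{displaymath}
I then reparameterize by writing $\epsilon n = k + \theta$ with $k \in \{0, \ldots, n-1\}$ and $\theta \in [0,1)$, so each continuous term takes the form $\bigl(\R_{(d+k)\eta + \theta\eta} \otimes \R_{(m-k)\eta - \theta\eta}\bigr)(a\eta)$. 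The task reduces to proving that this is contained in $(\R_{(d+k)\eta} \otimes \R_{(m-k)\eta})(a\eta)$ for the same $k$. Unwinding the three clauses of the composition definition, it suffices to establish two pointwise inclusions: (a) $\R_{(d+k)\eta + \theta\eta}(a\eta) \subseteq \R_{(d+k)\eta}(a\eta)$, and (b) $\R_{(m-k)\eta - \theta\eta}\bigl((a+d+k)\eta + \theta\eta\bigr) \subseteq \R_{(m-k)\eta}\bigl((a+d+k)\eta\bigr)$. Inclusion (b) is immediate: the left-hand set's departure constraint is stricter than the right-hand one, and both sides share the common arrival bound $(a+d+m)\eta$.

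The main obstacle, and the crux of the proof, is inclusion (a), which captures the intuition that a sub-epoch $\theta\eta$ of slack with $\theta < 1$ cannot unlock any new reachability when the clock starts at an epoch boundary. To prove it, I will invoke the same nudging idea that underpins the Regular Reachability theorem. Given any journey $\J = \{(e_1, t_1), \ldots, (e_K, t_K)\}$ in $\G$ with $t_1 \geq a\eta$ and $t_K + \tau \leq (a+d+k)\eta + \theta\eta$, I form $\J' = \{(e_i, t_i')\}$ by setting $t_i' = \lfloor t_i/\eta \rfloor\, \eta$. Using the $\eta$-regularity of $\G$ together with $\tau = n\eta$, the nudged journey $\J'$ is still valid: the edge-presence requirement on each hop is preserved because $\G$ is constant on each epoch interior and that interior graph is contained in the boundary graph $\G(k\eta)$, while the spacing condition $t_{i+1}' \geq t_i' + \tau$ survives because shifting by an integer number of epochs commutes with the floor. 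Since $\theta < 1$, the nudged arrival $\bigl(\lfloor t_K/\eta \rfloor + n\bigr)\eta$ is a multiple of $\eta$ strictly below $(a+d+k+1)\eta$, hence at most $(a+d+k)\eta$, placing $(u,v)$ in $\R_{(d+k)\eta}(a\eta)$ and completing the proof.
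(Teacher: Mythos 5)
Your proof is correct and follows essentially the same route as the paper's: both apply the Decomposition theorem at $t=a\eta$ and then round the continuous parameter $\epsilon$ down to $k=\lfloor \epsilon n\rfloor$ by nudging each hop of a witnessing journey back to the start of its epoch (your inline nudging argument is precisely the paper's Lemma~\ref{lemma:journeys_regular}). The only minor divergence is your inclusion (b), which you settle by a direct departure-monotonicity observation where the paper instead invokes Proposition~\ref{prop:growth} together with the $\eta$-regularity of $\R_{(m-k)\eta}$ from Theorem~\ref{th:regular_reachability}; both are valid.
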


This theorem is the adaptation of Theorem~\ref{th:decomposition} to regular TVGs. The formal proof is in the appendix (Section~\ref{subsec:proof_sampling}) but, as previously, it relies on the idea that within an epoch, there exists a little freedom to ``nudge'' journey departure and arrival times backwards or forwards. In a sense, if both the minimum departure time and maximal arrival time of a journey fall exactly at the start of a time epoch, then this journey can be divided into two sub-journeys whose departure and arrival times also map exactly to the start of epochs. Under this form, the value of a reachability graph at the start of each epoch can be \emph{exactly} calculated from the values, at the start of each epoch, of the proper set of $\tau$ pairs of reachability graphs.

Unfortunately, this exact formula does not extend to the time spent strictly within an epoch. In a way, the state of the reachability graph during epochs is more important than its state at their starting times. Indeed, the time spent at \emph{exactly} the start of epochs is infinitesimally small. Therefore any metric averaged over time (e.g. average density) will depend only on the states strictly within epochs. In the next section, we propose an upper and a lower bound on the reachability graph during epochs. As we will later see in Section~\ref{sec:results}, these upper and lower bounds are in fact nearly always equal and therefore achieve an excellent approximation of the real reachability graph.

\subsection{Upper and lower bounds}
\label{subsec:approximation}

We define for every regular reachability graph $\R_\delta$ an upper and lower bound, i.e., two TVGs $\Upp_\delta$ and $\Low_\delta$ such that $\Low_\delta \subseteq \R_\delta \subseteq \Upp_\delta$. Both of these bounds are equal to the reachability graph at the start of each epoch but only differ slightly during the epoch. The upper bound is a straightforward application of Theorem~\ref{th:regular_reachability}.

\begin{proposition}[\bfseries{\scshape{Upper Epoch approximation}}]
\label{prop:upper_epoch_approx}
Let $\G$ be an $\eta$-regular TVG whose edge traversal time is $\tau \in \eta\N^*$. For $\delta \in \eta\N$, let $\R_\delta$ be a reachability graph of $\G$. We define $\R_\delta$'s \emph{upper approximation}, the TVG $\Upp_\delta$, as follows ($\forall a\eta \leq t < (a+1)\eta$ with $a\in\N$):
\begin{itemize}
\item if $t=a\eta$, then $\Upp_\delta(a\eta) = \R_\delta(a\eta)$;
\item if $t > a\eta$, then $\Upp_{\delta}(t) = \R_\delta(a\eta) \cap \R_\delta\left((a+1)\eta\right)$.
\end{itemize}
As defined, $\forall \delta, \R_\delta \subseteq \Upp_\delta$.
\end{proposition}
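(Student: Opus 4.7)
The plan is to verify the inclusion $\R_\delta(t) \subseteq \Upp_\delta(t)$ pointwise in $t$, splitting on whether $t$ sits at the start of an epoch or strictly inside one. Since $\Upp_\delta$ is defined piecewise on epochs of $\G$, this case split is both natural and unavoidable.

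First I would handle the boundary case $t = a\eta$. Here the definition of $\Upp_\delta$ gives $\Upp_\delta(a\eta) = \R_\delta(a\eta)$, so the inclusion $\R_\delta(a\eta) \subseteq \Upp_\delta(a\eta)$ holds with equality and there is nothing to prove. This case is dispatched in one line.

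For the interior case $a\eta < t < (a+1)\eta$, the definition gives $\Upp_\delta(t) = \R_\delta(a\eta) \cap \R_\delta((a+1)\eta)$, and the required inclusion $\R_\delta(t) \subseteq \R_\delta(a\eta) \cap \R_\delta((a+1)\eta)$ is exactly the content of Theorem~\ref{th:regular_reachability} applied to the epoch index $a$. Since the hypotheses of Proposition~\ref{prop:upper_epoch_approx} (namely $\G$ is $\eta$-regular, $\tau \in \eta\N^*$, and $\delta \in \eta\N$) match the hypotheses of Theorem~\ref{th:regular_reachability} verbatim, the theorem applies directly and delivers the claim.

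There is essentially no obstacle: the proposition is a restatement of Theorem~\ref{th:regular_reachability} in the packaging of a pointwise upper bound $\Upp_\delta$. The only ``work'' is to check that the two cases in the definition of $\Upp_\delta$ exhaust all $t \in \T$ (which they do, since the epochs $[a\eta,(a+1)\eta[$ partition $\T$) and to observe that the per-epoch inclusion $\R_\delta \subseteq \Upp_\delta$ entails the TVG inclusion in the sense of Definition~\ref{def:inclusion}. Accordingly, my write-up will be short: a one-sentence boundary case, a one-sentence invocation of Theorem~\ref{th:regular_reachability} for the interior case, and a concluding sentence noting that these pointwise inclusions assemble into $\R_\delta \subseteq \Upp_\delta$.
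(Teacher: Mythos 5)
Your proposal is correct and matches the paper's own justification, which simply notes that the upper bound is a straightforward application of Theorem~\ref{th:regular_reachability}: the boundary case is an equality by definition, and the interior case is precisely the inclusion $\R_\delta(t) \subseteq \R_\delta(a\eta) \cap \R_\delta\left((a+1)\eta\right)$ that the theorem provides.
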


The lower bound is a little more complicated and involves a modified version of the composition operator (Definition~\ref{def:composition}).

\begin{definition}(\textbf{Approximate composition})
\label{def:approximate_composition}
Let $\R_\delta$ and $\R_\mu$ be two reachability graphs of an $\eta$-regular TVG $\G$ with $\delta$ and $\mu$ in $\eta\N^*$. Let $\Low_\delta$ and $\Low_\mu$ be two TVGs such that $\Low_\delta \subseteq \R_\delta$ and $\Low_\mu \subseteq \R_\mu$. For all times $a\eta \leq t < (a+1)\eta$ with $a\in\N$, we define their \emph{approximate composition} $\Low_\delta \odot \Low_\mu$ as follows:
\begin{itemize}
\item if $t=a\eta$, then $\left( \Low_\delta \odot \Low_\mu \right)(a\eta) = \left( \R_\delta \otimes \R_\mu \right)(a\eta)$;
\item if $t>a\eta$, then
\end{itemize}
\begin{displaymath}
(u,v) \in \left(\Low_\delta \odot \Low_\mu\right)(t) \Leftrightarrow
\left\{
  \begin{array}{l}
    (u,v) \in \R_\delta\left((a+1)\eta\right)\text{, or} \\
    (u,v) \in \R_\mu(a\eta+\delta)\text{, or} \\
    \exists w \in V, (u,w) \in \Low_\delta(t) \text{ and } \\
    \quad (w,v) \in \Low_\mu(t+\delta)
  \end{array} \right.
\end{displaymath}
\end{definition}

By definition, at the start of each epoch the approximate composition is equal to the regular composition of the reachability graphs. During an epoch, the condition $(u,v) \in \R_\delta\left((a+1)\eta\right)$ is \emph{easier} to meet than the more intuitive $(u,v) \in \Low_\delta(t)$. Indeed, with Theorem~\ref{th:regular_reachability}, $\Low_\delta(t) \subseteq \R_\delta(t) \subseteq \R_\delta\left( (a+1)\eta \right)$, and this composition operator will catch the arcs (including the $0$-second ones) occurring at the start of the next epoch. The same is true for the $(u,v) \in \R_\mu(a\eta+\delta)$ condition for arcs ending at the start of the current ($\delta$-shifted) epoch. Combined, these more inclusive conditions make for a \emph{tighter} lower bound.

\begin{proposition}[\bfseries{\scshape{Lower Epoch approximation}}]
\label{prop:lower_epoch_approx}
Let $\G$ be an $\eta$-regular TVG whose edge traversal time is $\tau =n\eta$ with $n\in\N^*$. Let $\Rb = \{ \R_\delta \}_{\delta\in\eta\N}$ be the set of all its derived reachability graphs. We recursively define $\R_\delta$'s \emph{lower approximation} (i.e., the TVG $\Low_\delta$) as follows. For $d < 2n$, $\Low_{d\eta} = \R_{d\eta}$ and for all $(d,m)$ in $\N\times\N$ such that $d \geq n$ and $m \geq n$,
\begin{displaymath}
\Low_{(d+m)\eta} = \bigcup_{0 \leq k < n} \Low_{(d+k)\eta} \odot \Low_{(m-k)\eta}.
\end{displaymath}
As defined, $\forall d \in \N, \Low_{d\eta} \subseteq \R_{d\eta}$.
\end{proposition}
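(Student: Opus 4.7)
The plan is to prove $\Low_{d\eta} \subseteq \R_{d\eta}$ by strong induction on $d$. The base case $d < 2n$ is immediate, since the proposition sets $\Low_{d\eta} = \R_{d\eta}$ by definition. For the inductive step on $s = d+m$ with $d,m \geq n$, since $\Low_{s\eta}$ is a union over $k \in \{0,\dots,n-1\}$ of approximate compositions, it suffices to show that for each such $k$, $\Low_{(d+k)\eta} \odot \Low_{(m-k)\eta} \subseteq \R_{s\eta}$. Writing $\delta = (d+k)\eta$ and $\mu = (m-k)\eta$, both indices satisfy $\eta \leq \delta, \mu < s\eta$ (using $d \geq n \geq 1$, $k < n \leq m$, and $m-k \geq 1$), so the induction hypothesis supplies $\Low_\delta \subseteq \R_\delta$ and $\Low_\mu \subseteq \R_\mu$.

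\textbf{Two regimes.} Fix a time $t$ and let $a \in \N$ be such that $a\eta \leq t < (a+1)\eta$. At the epoch boundary $t = a\eta$, Definition~\ref{def:approximate_composition} collapses $\odot$ to $\otimes$, so $\bigcup_k (\Low_{(d+k)\eta} \odot \Low_{(m-k)\eta})(a\eta) = \bigcup_k (\R_{(d+k)\eta} \otimes \R_{(m-k)\eta})(a\eta)$, which is exactly $\R_{s\eta}(a\eta)$ by Theorem~\ref{th:sampling}. Inside an epoch ($a\eta < t < (a+1)\eta$), I would unpack each of the three disjuncts in the definition of $\odot$ and, in each case, exhibit a journey in $\G$ from $u$ to $v$ with departure $\geq t$ and arrival $\leq t + s\eta$.

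\textbf{Main obstacle.} The third disjunct (existence of an intermediate $w$) is handled by concatenating the two journeys provided by the induction hypothesis $\Low_\delta(t) \subseteq \R_\delta(t)$ and $\Low_\mu(t+\delta) \subseteq \R_\mu(t+\delta)$; the first journey finishes by $t+\delta$ and the second starts at or after $t+\delta$, as required by Definition~\ref{def:journey}. The technical care is for the two ``shifted'' disjuncts, where reachability is witnessed at a time other than $t$. For $(u,v) \in \R_\delta((a+1)\eta)$, the witnessing journey departs after $(a+1)\eta > t$ and arrives by $(a+1)\eta + \delta \leq t + \eta + \delta \leq t + \delta + \mu$, which uses $\mu \geq \eta$. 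For $(u,v) \in \R_\mu(a\eta + \delta)$, it departs after $a\eta + \delta \geq a\eta + \tau \geq (a+1)\eta > t$ and arrives by $a\eta + \delta + \mu < t + \delta + \mu$. Both arguments hinge on the hypothesis $\tau = n\eta \geq \eta$ supplying exactly one epoch of slack on each side --- just enough to absorb the shift from $t$ to the nearest epoch boundary, which is precisely why Definition~\ref{def:approximate_composition} writes $\R_\delta((a+1)\eta)$ instead of the tighter (but here insufficient) $\Low_\delta(t)$.
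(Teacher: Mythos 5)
Your proposal is correct and follows essentially the same route as the paper's proof: induction on the total delay, the epoch-boundary case dispatched via Theorem~\ref{th:sampling}, and a case analysis on the three disjuncts of Definition~\ref{def:approximate_composition} that uses exactly the slack $\mu \geq \eta$ and $\delta \geq \tau \geq \eta$ you identify. The only difference is cosmetic: where you exhibit the witnessing journeys directly from Definition~\ref{def:reachability_graph}, the paper instead chooses an explicit $\epsilon \in [0,1)$ for each disjunct and concludes by invoking Theorem~\ref{th:decomposition}.
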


The proof, detailed in Section~\ref{subsec:proof_lower_epoch_approx} of the appendix, is fairly straightforward and involves setting the value of $\epsilon$ in Theorem~\ref{th:decomposition} as a function of $\eta$, $t$, $k$, and $n$.

This recursive definition means that the exact value of $\Low_\delta$ during a time epoch \emph{depends on the sequence of compositions that was used to calculate it.} For example, in the simple situation where $\tau=\eta=1$, the lower bounds $\Low_4 = \Low_2 \odot \Low_2$ and $\Low_4' = \Low_3 \odot \Low_1$ may be different. Proposition~\ref{prop:lower_epoch_approx} only guarantees that they are both included in the real reachability graph $\R_4$. However, as we will see, this is not a concern because this lower bound is tight regardless of how it is calculated.

\begin{figure}[t]
  \centering
  \includegraphics{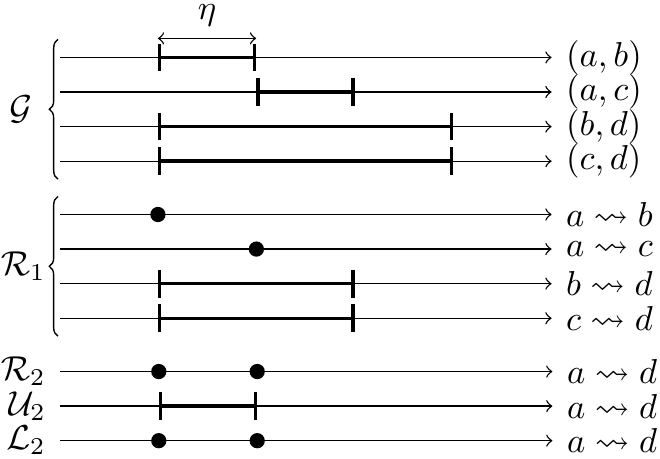}
  \caption{Upper epoch approximation with $\tau=\eta=1$. Arrows point in the direction of time. Intervals denote the presence of an edge or arc during that time. Dots are intervals reduced to a single point in time. In this example the lower bound $\Low_2$ is equal to the reachability graph $\R_2$ but the upper bound $\Upp_2$ is not.\label{fig:upper_approx}}
\end{figure}

Fig.~\ref{fig:upper_approx} is an example of a situation with four vertices where the upper bound has an arc during an epoch that is not in the reachability graph. Here $\tau=\eta=1$s. $\R_1$ is exactly derived from $\G$ by shortening the end time of each edge by $\tau$. For example, this leads to an ephemeral arc from $a$ to $b$ and a two-second-long arc from $b$ to $d$ (down from a three-second-long edge $(b,d)$ in $\G$). Obviously, not all arcs are represented on Fig.~\ref{fig:upper_approx}. In $\R_2$, there are two ephemeral arcs from $a$ to $d$, corresponding to the $a \leadsto b \leadsto d$ journey, and the $a \leadsto c \leadsto d$ journey an epoch later. In between those instants, it is both too late to use the first journey and too early to use the second. Since this arc exists at the start of two successive epochs, $\Upp_2$ errs in considering that it exists during the entire epoch. In this example, the lower bound $\Low_2$ matches the reachability graph.

\begin{figure}[t]
  \centering
  \includegraphics{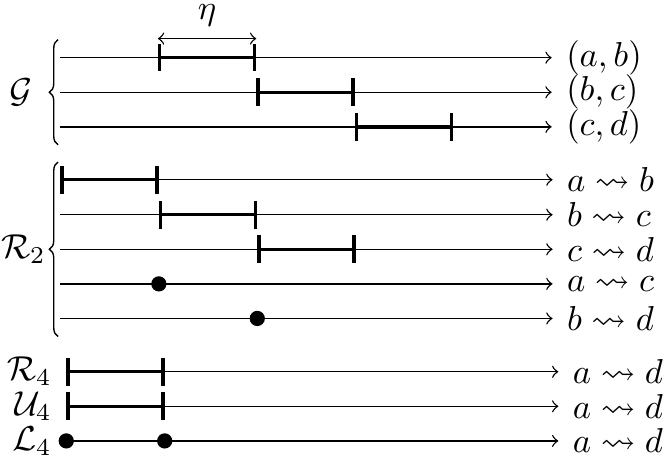}
  \caption{Lower epoch approximation with $\tau=\eta=1$. Arrows point in the direction of time. Intervals denote the presence of an edge or arc during that time. Dots are intervals reduced to a single point in time. In this example the upper bound $\Upp_4$ is equal to the reachability graph $\R_4$ but the lower bound $\Low_4$ is not.\label{fig:lower_approx}}
\end{figure}

Conversely, Fig.~\ref{fig:lower_approx} is an example of a situation with three vertices where the lower bound misses an arc during an epoch that is present in the reachability graph. Here again $\tau=\eta=1$. The one-second-long edges in $\G$ all become ephemeral $0$-second arcs in $\R_1$ (not shown on the figure), that in turn become one-second-long arcs in $\R_2$. Furthermore, $\R_2$ also contains two ephemeral arcs $(a,c)$ and $(b,d)$ that correspond to the $a \leadsto b \leadsto c$ and $b \leadsto c \leadsto d$ journeys respectively. Here $\R_2 = \Low_2 = \Upp_2$. Finally, $\R_4$ contains an arc $(a,d)$ that corresponds to the existence of the three-hop journey $ a \leadsto b \leadsto c \leadsto d$. $\Low_4$ correctly identifies this journey at two instants $k$ and $k+1$. At time $k$, it composes arc $(a,b)$ from $\R_2(k)$ with the ephemeral arc $(b,d)$ in $\R_2(k+2)$. At time $k+1$ it composes the ephemeral arc $(a,c)$ in $\R_2(k+1)$ with the arc $(c,d)$ in $\R_2(k+3)$. However, for $ k < t < k+1$, $\Low_4$ has no way of finding the journey from $a$ to $b$. Note that in this case, $\Upp_4$ is equal to $\R_4$.

The example of Fig.~\ref{fig:lower_approx} also helps us understand the absence of error propagation when calculating successive lower approximations from previous lower approximations. In this example, thanks to the conditions in Definition~\ref{def:approximate_composition} that refer to the starting times of epochs,  $\Low_8$ will ``bridge'' over the missing $(a,d)$ arc in $\Low_4$. However it may miss composed arcs that use the $(a,d)$ arc in $\R_4$. These missed arcs will be, in-turn, ``bridged'' in, for instance, a $\Low_{16}$ approximation. Roughly speaking, errors during epochs do not propagate beyond two compositions. Combined with the fact that the lower approximation matches the reachability graph at the start of each epoch, this leads, as we will verify empirically, to a tight lower approximation or the reachability graph.

\subsection{A few case studies}
\label{subsec:simple}

In this section, we examine how the theoretical results above apply to some simple situations.

\medskip\noindent\textbf{Zero edge traversal time.}
The reachability graphs of $\eta$-regular TVGs with zero edge traversal time ($\tau=0$) are very easy to calculate. In particular $(u,v) \in \R_0(t)$ if and only if $u$ and $v$ are in the same connected component at time $t$. Similarly, $\R_\eta(t)$ can be calculated from the connected components at time $t$ and $t+\eta$. Furthermore, Theorem~\ref{th:decomposition} becomes: $\R_{\delta+\mu} = \R_\delta \otimes \R_\mu$. This relation can then be applied repeatedly (e.g., in a binary-exponentiation) to obtain any $\R_\delta$ with $\delta \in \eta\N^*$. Furthermore, the reachability graphs thus calculated are \emph{exact}. However, in this specific case where the edge traversal time is null, more efficient algorithms for calculating reachability graphs exist. For example the algorithm proposed by Chaintreau et al. for computing delay-optimal paths could be easily adapted to this purpose~\cite{chaintreau_diam}.

\medskip\noindent\textbf{Several hops per epoch.}
While all of the results on regular reachability graphs in Sections~\ref{subsec:regular_reachability} and~\ref{subsec:approximation} assume that the edge traversal time $\tau$ is in $\eta\N^*$, they can be simply adapted to the case where $\tau = \eta / n$ with $n\in\N^*$, i.e., when up to $n$ edges may be crossed during a single epoch. Indeed, in this case, any $\eta$-regular graph would also be $\tau$-regular, and we can apply all of our results as if $\tau=\eta$. For example, Proposition~\ref{prop:lower_epoch_approx} is reduced to $\Low_{(d+m)\eta} = \Low_{d\eta} \odot \Low_{m\eta}$.

\medskip\noindent\textbf{Unit delay.}
For $\tau>0$, the reachability graph $\R_\tau$ is trivially calculable from $\G$. Indeed, for any edge $(u,v)$ in $\G$ that appears at time $t_1$ and disappears at time $t_2 \geq t_1+\tau$, the arcs $(u,v)$ and $(v,u)$ appear in $\R_\tau$ at time $t_1$ and disappear at time $t_2-\tau$. This derivation is simple but essential for bootstrapping iterations of lower bound compositions (e.g., repeated applications of Proposition~\ref{prop:lower_epoch_approx}).

\section{Efficient computation of reachability graphs}
\label{sec:algorithm}

\subsection{Families of reachability graphs}
\label{subsec:families}

Another interesting property of lower approximations of reachability graphs is that an upper approximation may be derived from it. Indeed, since it contains the exact values of the reachability graph at the start of each epoch, it is trivial to calculate their intersection during each epoch. While the method described in this section focuses on the efficient computation of the lower approximations of reachability graphs, it simultaneously computes the upper approximations.

At a high level, the algorithm presented in this section is a \textit{binary exponentiation} on families of lower bounds of reachability graphs using a special additive operator. To simplify notations, we will consider in this section that $\eta=1$ and $\tau \in \N^*$.

\begin{definition}(\textbf{Lower bound family})
\label{def:family}
Let $\G$ be a 1-regular TVG whose edge traversal time is $\tau \in \N^*$. For $d \in \tau\N$, let $\Low_d$ be the lower approximation of a reachability graph $\R_d$ of $\G$. For $d \geq \tau$, we define $\Lb_d$, the \textit{family} of $\Low_d$ such that $\Lb_d = \left\{ \Low_{d+i} \right\}_{-\tau < i < \tau}$.
\end{definition}

\begin{proposition}[\bfseries{\scshape{Family additivity}}]
\label{prop:family_decomp}
Let $\G$ be a 1-regular TVG whose edge traversal time is $\tau \in \N^*$. Let $\R_d$ and $\R_m$ be two reachability graphs of $\G$ such that $d \in \tau\N^*$ and $m \in \tau\N^*$, and $\Lb_d$ and $\Lb_m$ the respective families of their lower approximations. We define $\Lb_d \oplus \Lb_m$ the set of TVGs such that
\begin{displaymath}
\begin{array}{ll}
\Lb_d \oplus \Lb_m = & \left\{ \bigcup_{0 \leq k < \tau} \Low_{d+k} \odot \Low_{m+i-k}\right\}_{0 \leq i < \tau} \\ 
& \bigcup \left\{ \bigcup_{0 \leq k < \tau} \Low_{d+i+k} \odot \Low_{m-k}\right\}_{-\tau < i < 0}
\end{array}
\end{displaymath}
Then (i) any element $\Low_{d+m+i} \in \Lb_d \oplus \Lb_m$ with $-\tau < i < \tau$ is a lower approximation of $\R_{d+m+i}$, and (ii) any TVG in $\Lb_{d+m}$ can be calculated from $\tau$ compositions of pairs of TVGs in $\Lb_d \times \Lb_m$.
\end{proposition}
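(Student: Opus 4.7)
The plan is to verify both parts by recognizing the two branches in the definition of $\Lb_d \oplus \Lb_m$ as two applications of the decomposition recursion of Proposition~\ref{prop:lower_epoch_approx}, one for each sign of the offset $i$.

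For part (i), I would split on the sign of $i$. When $0 \leq i < \tau$, rewrite $d + m + i = d + (m+i)$; since $d, m \in \tau\N^*$ we have $d \geq \tau$ and $m+i \geq m \geq \tau$, so Proposition~\ref{prop:lower_epoch_approx} (with $\eta=1$ and $n=\tau$) applies directly and identifies $\bigcup_{0 \leq k < \tau} \Low_{d+k} \odot \Low_{m+i-k}$ as a lower approximation of $\R_{d+m+i}$. When $-\tau < i < 0$, rewrite $d+m+i = (d+i) + m$; here $m \geq \tau$ is still fine, but $d+i$ can drop as low as $1$ when $d=\tau$, so Proposition~\ref{prop:lower_epoch_approx} does not apply verbatim. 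I would close this gap by checking the two defining properties of a lower approximation separately: (a) at every epoch start $a\eta$, Theorem~\ref{th:sampling}---which requires only \emph{one} of the two delays to be at least $\tau$---gives $\R_{d+m+i}(a\eta) = \bigcup_{0 \leq k < \tau} \bigl( \R_{d+i+k} \otimes \R_{m-k} \bigr)(a\eta)$, and the first clause of Definition~\ref{def:approximate_composition} forces every $\Low \odot \Low'$ to coincide with the corresponding $\R \otimes \R'$ at epoch boundaries; (b) inside each epoch, every summand $\Low_{d+i+k} \odot \Low_{m-k}$ is itself contained in $\R_{d+m+i}$, which follows by unrolling the three disjuncts of Definition~\ref{def:approximate_composition}, using $\Low_\delta \subseteq \R_\delta$ together with Theorem~\ref{th:regular_reachability}, and concatenating the underlying journeys in $\G$. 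Taking unions preserves both (a) and (b).

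Part (ii) is then essentially bookkeeping. The family $\Lb_{d+m}$ contains the $2\tau-1$ TVGs $\Low_{d+m+i}$ for $-\tau < i < \tau$, and part (i) exhibits, for each such $i$, one valid lower approximation written as a union of exactly $\tau$ approximate compositions. I would finish by verifying that each operand used really belongs to the correct family: in both branches the first-operand index differs from $d$ by a value in $(-\tau, \tau)$ (namely $k$ or $i+k$), and the second-operand index differs from $m$ by a value in $(-\tau, \tau)$ (namely $i-k$ or $-k$), matching Definition~\ref{def:family} for $\Lb_d$ and $\Lb_m$ respectively.

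The main obstacle is the asymmetric hypothesis of Proposition~\ref{prop:lower_epoch_approx}: the negative-offset branch corresponds to a split whose left component $d+i$ may be strictly smaller than $\tau$, so that proposition cannot be invoked as a black box. Bridging this gap requires going one step further back, to the more permissive Theorem~\ref{th:sampling} for equality at epoch starts and to a direct journey-concatenation argument for the interior of each epoch. Once that bridge is in place, both parts of the claim follow mechanically.
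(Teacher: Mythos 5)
Your proposal is correct and follows the same route as the paper: the published proof is exactly the two-case split on the sign of $i$, rewriting $d+m+i$ as $d+(m+i)$ or $(d+i)+m$ and citing Proposition~\ref{prop:lower_epoch_approx} for both branches. The extra care you take on the negative-offset branch (where $d+i$ may fall below $\tau$ when $d=\tau$, so that Proposition~\ref{prop:lower_epoch_approx} does not apply verbatim) addresses a hypothesis mismatch that the paper's two-line proof silently glosses over, and your bridge via Theorem~\ref{th:sampling} at epoch starts plus the interior-of-epoch containment argument is sound.
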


\begin{proof}
Consider any lower approximation $\Low_{d+m+i} \in \Lb_{d+m}$. If $-\tau < i < 0$, $\Low_{d+m+i} = \Low_{(d+i)+m} =$ \\$\bigcup_{0 \leq k < \tau} \Low_{d+i+k} \odot \Low_{m-k}$ (Theorem~\ref{prop:lower_epoch_approx}). If $0 \leq i < \tau$, $\Low_{d+m+i} = \Low_{d+(m+i)} = \bigcup_{0 \leq k < \tau} \Low_{d+k} \odot \Low_{m+i-k}$ (Theorem~\ref{prop:lower_epoch_approx}).
\end{proof}

\begin{figure}
  \centering
  \includegraphics{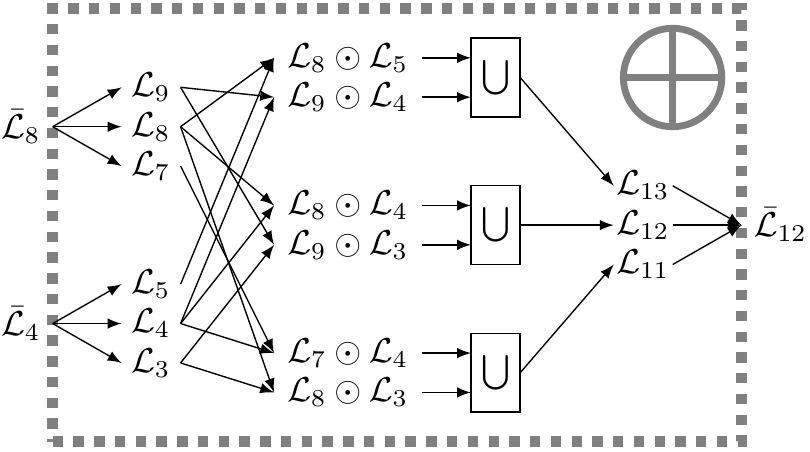}
  \caption{Adding $\Lb_4$ and $\Lb_8$ families to obtain an $\Lb_4 \oplus \Lb_8 = \Lb_{12}$ family. In this example $\tau=2$ and $\eta=1$.\label{fig:adding}}
\end{figure}

We now have our self-sufficient elements, the lower bound families, and an additive operation $\oplus$ between them. For illustration, Fig.~\ref{fig:adding} details the process of adding $\Lb_4$ and $\Lb_8$ families to obtain an $\Lb_4 \oplus \Lb_8 = \Lb_{12}$ family. The $\Low_3$, $\Low_4$, $\Low_5$, $\Low_7$, $\Low_8$, and $\Low_9$ lower bounds are combined as inputs for three applications of Proposition~\ref{prop:lower_epoch_approx} that yield $\Low_{11}$, $\Low_{12}$, and $\Low_{13}$. Viewed as a black box, this operation combines the $\Lb_4$ and $\Lb_8$ families into an $\Lb_{12}$ family.

The inner workings of the $\oplus$ operator are embarrassingly parallel. Indeed, each application of Proposition~\ref{prop:lower_epoch_approx} can be run completely independently of the others. This opens the way for highly distributed implementations, whose speed will be determined by that of the composition of $\tau$ pairs of lower bounds of reachability graphs. Accordingly, the next section proposes an efficient algorithm for this composition operation.

\subsection{Composing reachability graphs}
\label{subsec:composing}

A time-varying graph $\G$ may be stored as a time-indexed sequence of edge $\text{\textsf{UP}}$ and $\text{\textsf{DOWN}}$ events. For example, if at time $t$ an event $\{ (u,v), \text{\textsf{DOWN}} \}$ occurs, then the edge $(u,v)$ disappears at time $t$ in $\G(t)$. Such a representation is well suited for algorithms that sequentially examine all states of the TVG.

We present a streaming algorithm for composing $\tau$ reachability graph lower bounds as in Proposition~\ref{prop:lower_epoch_approx}. Streaming algorithms are well suited to TVGs as their memory requirements do not depend on the duration of the trace but only on the number of vertices~\cite{Alon1996}. Memory is indeed a limited resource, as reachability graphs can become fully connected cliques and data structures such as adjacency matrices cannot therefore be considered sparse. Furthermore, in our case, a streaming algorithm facilitates a parallel implementation of the $\oplus$ operator as one process can read the input families, duplicate their events, and dispatch these to various workers, each calculating one composition operation.

\newcommand{\pp}{++}
\newcommand{\mm}{--\kern0.2ex--}

\begin{algorithm}[t]
\DontPrintSemicolon
\SetAlgoVlined
\LinesNumbered
\SetKwInput{Require}{Require}
\SetKwInput{Ensure}{Ensure}
\SetKwInput{Data}{Local}
\Require{$\{ \Low_{d+k}, \Low_{m-k} \}$ \tcp{$\tau$ pairs of input streams} }
\Ensure{$\Low_{d+m}$ \tcp{the output stream} }
\Data{$\{ adj_{d+k}, adj_{m-k} \}$, \tcp{$\tau$ pairs of adjacency matrices}}
\Data{$counter[]$ \tcp{arc counter}}
\Data{$delayed_k[]$ \tcp{$\tau$ lists of delayed down events}}
\For{$i \leftarrow 0$ \KwTo $T$}{
  \For{$k \leftarrow 0$ \KwTo $\tau-1$}{
    \lForAll{$\{a,\text{\textsf{UP}}\} \in \Low_{d+k}[i]$}{$counter[a]$\pp} \label{alg:deltaup}
  }
  \ForAll{new arcs $a$}{ append $\{a,\text{\textsf{UP}}\}$ to $\Low_{d+m}[i-1]$  } \label{alg:up}
  \ForAll{arcs $a$ s.t. $counter[a]=0$}{ append $\{a,\text{\textsf{DOWN}}\}$ to $\Low_{d+m}[i-1]$  } \label{alg:down}
  \For{$k \leftarrow 0$ \KwTo $\tau-1$}{
    \ForAll{$\{(u,v),\text{\textsf{UP}}\} \in \Low_{d+k}[i]$}{
      add $(u,v)$ to $adj_{d+k}$ \;  \label{alg:duj1}
      \lForAll{$(v,w) \in adj_{m-k}$}{$counter[ (u,w) ]$\pp} \label{alg:duj2}
    }
    \ForAll{$\{(u,v),\text{\textsf{UP}}\} \in \Low_{m-k}[i+d+k]$}{ 
      $counter[ (u,v) ]$\pp \; \label{alg:muup1}
      add $(u,v)$ to $adj_{m-k}$ \; \label{alg:muup2}
      \lForAll{$(w,u) \in adj_{d+k}$}{$counter[ (w,v) ]$\pp} \label{alg:muup3}
    }
    \lForAll{$a \in delayed_k$}{$counter[k]$\mm} \; \label{alg:mudelayeddown}
    $delayed_k \leftarrow \emptyset$ \; \label{alg:cleardelayed}
    \ForAll{$\{(u,v),\text{\textsf{DOWN}}\} \in \Low_{m-k}[i+d+k]$}{ \label{alg:mudown}
      add $(u,v)$ to $delayed_k$ \; \label{alg:mudown1}
      remove $(u,v)$ from $adj_{m-k}$ \; \label{alg:mudown2}
      \lForAll{$(w,u) \in adj_{d+k}$}{ $counter[(w,v)]$\mm} \label{alg:mudown3}
    }
    \ForAll{$\{(u,v),\text{\textsf{DOWN}}\} \in \Low_{d+k}[i]$}{ \label{alg:deltadown1}
      $counter[ (u,v) ]$\mm \; \label{alg:deltadown2}
      remove $(u,v)$ from $adj_{d+k}$ \; \label{alg:deltadown3}
      \lForAll{$(v,w) \in adj_{m-k}$}{ $counter[(u,w)]$\mm} \label{alg:deltadown4}
    }
  }
}
\caption{Lower bound composition\label{alg:composition}}
\end{algorithm}

Algorithm~\ref{alg:composition} reads its input from $\tau$ pairs of event streams $\{\Low_{d+k}, \Low_{m-k} \}$ and writes $\Low_{d+m} = \bigcup_{0 \leq k < \tau} \Low_{d+k} \odot \Low_{m-k}$ to its output stream. It makes use of an arc counter that tracks how many of the conditions in the definitions of $\Low_{d+k} \odot \Low_{m-k}$ (see Definition~\ref{def:approximate_composition}) are verified by each arc. When a previously down link fulfills one of these conditions, its counter is initialized to $1$ and an $\text{\textsf{UP}}$ event is written to the output stream (Line~\ref{alg:up}). When this counter goes to $0$ the arc is removed from $\Low_{d+m}$ and a $\text{\textsf{DOWN}}$ event is written to the output stream (Line~\ref{alg:down}). Note that an arc may be brought up and down at the same time $i$ if its counter goes to $0$ right after it appears (i.e., an ephemeral arc). In this case, the arc triggers both lines~\ref{alg:up} and~\ref{alg:down}.

In more detail, this algorithm directly maps to the three conditions in the definition of the approximate composition $\odot$ (Definition~\ref{def:approximate_composition}):	

\medskip\noindent\textbf{Condition 1.} $(u,v) \in \R_{d+k}(i+1)$. This condition is handled on line~\ref{alg:deltaup}, \textit{before} making a decision on bringing arcs up or down at the \textit{previous} epoch. Indeed, the output stream at time $i$ can only be written to after reading the input streams up to time $i+1$.

\medskip\noindent\textbf{Condition 2.} $(u,v) \in \R_{m-k}(i+d+k)$. This condition means that down events in $\Low_{m-k}$ must be delayed for one epoch before lowering an arc's counter. This accounts for the local $delayed_k$ lists that are processed on lines~\ref{alg:mudelayeddown} to~\ref{alg:mudown1}.

\medskip\noindent\textbf{Condition 3.} $\exists w$, $(u,w) \in \Low_{d+k}(t)$ and $(w,v) \in \Low_{m-k}(t+d)$. This condition is checked by maintaining for each $k$ two adjacency lists $adj_{d+k}$ and $adj_{m-k}$ and checking upon $\text{\textsf{UP/DOWN}}$ events whether the end of an arc in $adj_{d+k}$ corresponds to the origin of an arc in $adj_{m-k}$ (lines~\ref{alg:duj1}-\ref{alg:duj2}, \ref{alg:muup2}-\ref{alg:muup3}, \ref{alg:mudown2}-\ref{alg:mudown3}, and~\ref{alg:deltadown3}-\ref{alg:deltadown4}).

\medskip Algorithm~\ref{alg:composition} is then used as a building block to implement the $\oplus$ addition of lower bound families. Starting from $\Low_\tau$'s family, for any $n \in \N^*$, $\Low_{n\tau}$ is obtained in $\log(n)$ applications of the $\oplus$ operation using a binary exponentiation process.

An implementation of Algorithm~\ref{alg:composition}, fully integrated into a binary exponentiation algorithm over lower bound families, is available as a part of our dynamic trace library (DiTL~\cite{DITL}). This package also contains the code for transforming a lower bound TVG into an upper bound TVG, as well as the time-varying dominating set computation used in Section~\ref{sec:results}.

\subsection{Complexity analysis}
\label{subsec:complexity analysis}

The worst-case memory requirements for this one-pass streaming algorithm are straightforward. Its local memory contains a non-sparse arc counter that requires up to $O(N^2)$ space. Furthermore, for each $0 \leq k < \tau$, it maintains two non-sparse adjacency matrices and an arc event list thereby requiring $O(N^2)$ space. Adding everything together yields a worst-case space complexity of $O(\tau N^2)$ that is \textit{independent} of the duration of the trace.

Before examining the worst-case time complexity, a word must be said about the implementation of the adjacency matrices (the arc counter is backed by an adjacency matrix). In our implementation, these are backed by per-vertex hash tables. Insertion and removal are therefore constant time operations but this approach may not scale to TVGs with much greater number of vertices than those considered in this paper. In this analysis we will consider that insertion and removal cost $O(\log(N))$ (e.g., by using binary trees).

At each epoch, arcs are brought up and/or down. For each $0 \leq k < \tau$, we note $M_{d+k}$ and $M_{m-k}$ the number of $\text{\textsf{UP}}$ and $\text{\textsf{DOWN}}$ events in the entire TVGs $\Low_{d+k}$ and $\Low_{m-k}$, respectively. Let $M = \max_k \left\{ \max\{M_{d+k},M_{m-k}\} \right\}$ be the maximum number of events in all the involved TVGs. In the worst case scenario where all possible arcs are updated at each epoch, we have $M = O(TN^2)$ events, where $T$ is the number of epochs. Each processed event costs a modification of the arc counter ($O(\log(N))$), an adjacency matrix ($O(\log(N)$), and up to $N$ counters for the composed paths ($O(N\log(N))$). All of these must be performed for the $\tau$ pairs of input TRGs. Adding everything together yields a worst-case time complexity of $O\left(\tau M N\log(N) \right)$, or $O\left(\tau T N^3\log(N) \right)$ in terms of $N$ exclusively. In practice however $M \ll TN^2$, so the first formulation is more accurate.

\section{Applications}
\label{sec:results}

Having formalized reachability graphs and detailed a method for efficiently computing upper and lower approximations of them in the previous sections, we now study the reachability graphs of several synthetic and real-life traces for a variety of edge traversal times $\tau$ and maximum delays $\delta$. In particular, we show how the properties of these reachability graphs place bounds on communication capabilities and highlight the asymmetric nature of dynamic networks.

\subsection{Datasets and metrics}
\label{subsec:datasets}

\begin{table}
  \centering
  \caption{Dataset characteristics. $N$ is the number of vertices, $M$ the number of edge UP/DOWN events, $T$ the duration, $P$ the beaconing period, and $\eta$ the time resolution.\label{table:datasets}\vspace*{2mm}}
  {\small \begin{tabular}{l|c|c|c|c|c}
    \textbf{Name} & $N$ & $M$ & $T$ & $P$ & $\eta$ \\
    \hline \hline
    Rollernet & $62$ & $99k$ & $3h$ & $15s$ & $1s$ \\
    Stanford & $782$ & $704k$ & $8h$ & $20s$ & $20s$ \\
    \hline
    Random Waypoint & $50$ & $46k$ & $8h$ & $1s$ & $1s$ \\
    Community & $50$ & $824k$ & $8h$ & $1s$ & $1s$ \\
    \hline
  \end{tabular}}
\end{table}

While we have calculated reachability graphs on many publicly available datasets, in this paper we present results based on two real-life contact traces selected for their short beaconing periods:

\medskip\noindent\textbf{Stanford~\cite{Salathe2010}.} As part of an epidemiology study, this trace captures face-to-face contacts among all students, teachers, and staff in a US high school between 7~a.m. and 4~p.m. The 782 ZigBee motes (TelosB Crossbow) sent beacons every 20 seconds (sending times are synchronized in the published trace).

\medskip\noindent\textbf{Rollernet~\cite{Tournoux2011}.} Opportunistic sighting of Bluetooth devices by groups of rollerbladers carrying Intel iMotes during a roller tour. The 62 iMotes performed neighborhood scans every 15 seconds. 

\medskip For comparison purposes, we also provide results based on two well understood synthetic mobility models:

\medskip\noindent\textbf{Random-Waypoint~\cite{leboudec05}.} We simulated $50$ nodes with speeds between 3 and 7~m/s from the stationary state in a $1,000\times 500$m$^2$ rectangle sending beacons every second with a 20~m transmission range.

\medskip\noindent\textbf{Community Model~\cite{musolesi05}.} We simulated $50$ nodes using the same parameters as above with 8 communities. In the community model, nodes with stronger social ties are more likely to be in geographic proximity.

\medskip The characteristics of these four $\eta$-regular traces are summarized in Table~\ref{table:datasets}.

For each reachability graph, we compute its time-varying dominating set (TVDS), i.e., a mutable set such that, at all times, there exists an incoming arc to any vertex in the TVG from a member of the TVDS. Thus, at any given time $t$, the TVDS is a traditional dominating set on the directed graph at time $t$. We note this TVDS $\D_\delta$ if it is derived from a reachability graph $\R_\delta$. 

Even on static graphs, calculating a minimal dominating set is a classic NP-complete problem. Here we adapt the well-known greedy algorithm for choosing multipoint relays for broadcasting in a wireless network to the time-varying context~\cite{laouiti2001}. The size of the dominating set calculated by this algorithm is within a factor $\log(N)$ of the optimal, where $N$ is the number of nodes. Our TVDS calculation algorithm reacts to arc UP/DOWN events as follows.
\begin{enumerate}
\item If, after the arc event is processed, the previous dominating set is no longer a dominating set, then proceed to step \ref{tvds:calc}. Otherwise do nothing.
\item \label{tvds:calc} Iteratively build the new dominating set, starting from an empty set, by greedily adding the node with the greatest outgoing degree to nodes not yet covered by the new dominating set. If two nodes have equal outgoing degrees, pick the one that is in the previous dominating set.
\end{enumerate}
The time-varying dominating set thus calculated attempts to be both reasonably stable over time and close the optimal.

\begin{table}
  \centering
  \caption{Metrics on $\eta$-regular reachability graphs of duration $T$. $\X_k$ is the value of $\X$ during the $k^{th}$ epoch, i.e., $\X_k = \X\left( (k+\frac{1}{2})\eta \right)$.\label{table:metrics}\vspace*{2mm}}
  {\small \begin{tabular}{m{2.6cm}|m{5cm}}
    \textbf{Metric name} & \textbf{Definition} \\
   \hline \hline
    Avg. dominating set size & $\frac{\eta}{TN} \sum_{k=0}^{T/\eta} \| \D_k \|$ \\ \hline
    Avg. density & $\frac{\eta}{TN(N-1)} \sum_{k=0}^{T/\eta} \| \R_k \|$ \\ \hline
    Avg. asymmetry & $\frac{\eta}{T} \sum_{k=0}^{T/\eta} \frac{\| asym(\R_k) \| }{\| asym(\R_k) \| + \| sym(\R_k) \| /2 } $ \\
    \hline
  \end{tabular}}
\end{table}

The rest of this section focuses on the following metrics. Their formulas and average values are found in Table~\ref{table:metrics}.

\begin{itemize}

	\item \textbf{Density.} The ratio of the number of arcs in a reachability graph at time $t$ over the total number of possible arcs.

	\item \textbf{Dominating set size}. The normalized number of vertices in the dominating set at time $t$.

	\item \textbf{Asymmetry}. Here we no longer consider directed arcs but undirected pairs of vertices. At time $t$, the asymmetry is the ratio of asymmetric pairs among pairs that have at least one edge between them.

\end{itemize}

\subsection{Algorithm accuracy}
\label{subsec:performance}

\begin{figure}[t]
  \centering
  \includegraphics{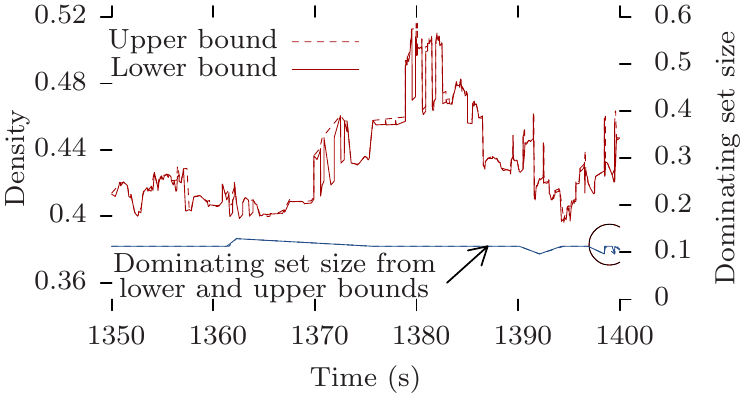}
  \caption{The upper and lower bounds give nearly identical values of density and dominating set size. Example taken from the Rollernet trace with $\tau$=5s and $\delta$=1min.\label{fig:error}}
\end{figure}

The approximation algorithm detailed in Section~\ref{sec:algorithm} is \textit{extremely accurate}. Indeed, thanks to Theorem~\ref{th:sampling} and Definition~\ref{def:approximate_composition}, the approximation calculates the \emph{exact} value for all times $t \in \eta\N$. Furthermore, the upper and lower bounds are nearly identical during epochs ($t \notin \eta\N$). Fig.~\ref{fig:error} plots the density over time for both the upper and lower bounds of the Rollernet reachability graph for $\tau=5$s and $\delta=1$min. The plot has been zoomed in to show only a small span of the $y$-axis for 30 seconds. The upper and lower bounds on density are nearly equal at all times. So are the values of the dominating set size computed from the upper and lower approximations that only disagree in the circled area on Fig.~\ref{fig:error}. These observations hold for all the reachability graphs computed in this paper. Indeed, looking at the values of average density and average dominating set of the over 5,000 pairs of upper/lower bound TVGs calculated in this paper, the maximum disagreement is $8.10^{-3}$ for the former and $4.10^{-2}$ for the latter. In practice, the difference is smaller than the width of the line in plots. Therefore, in the rest of this paper, we only plot the values based on the lower approximation.

\subsection{Revealing temporal structural properties}
\label{subsec:dynamic}

\begin{figure}[t]
  \centering
  \subfloat[$\delta=20$min\label{subfig:stanford_snap_20min}]{\includegraphics[width=0.45\columnwidth]{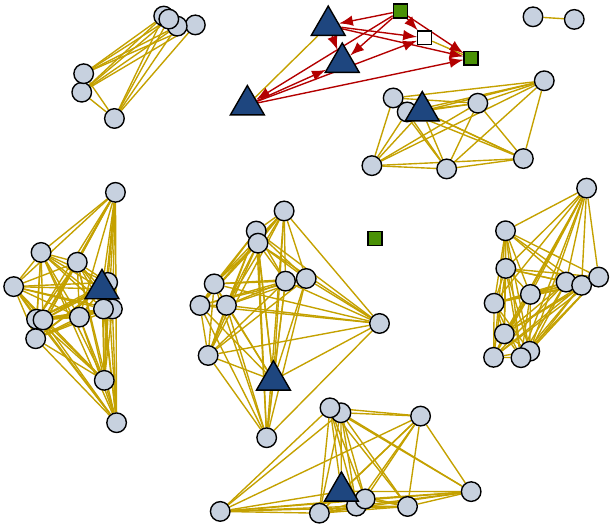}} \quad
  \subfloat[$\delta=40$min\label{subfig:stanford_snap_40min}]{\includegraphics[width=0.45\columnwidth]{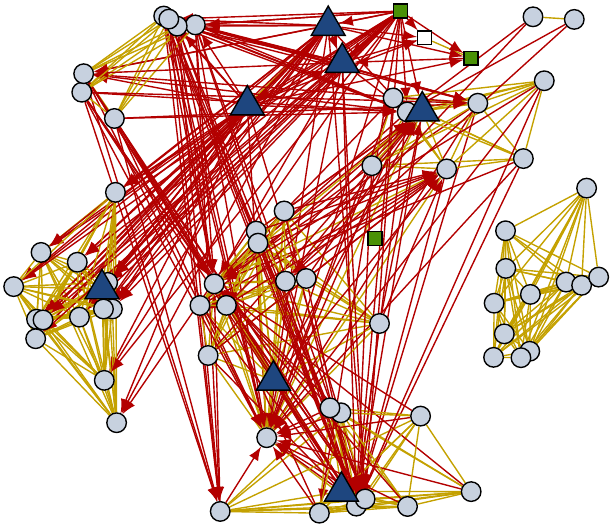}}
  \caption{Subset of Stanford's 20 and 40-minutes reachability graphs during morning classes ($\tau=20s$). Dark blue triangles are teachers; circles are students. The dark red arrows represent asymmetric arcs. The classroom structure is clearly visible.\label{fig:stanford_snap}}
\end{figure}

\begin{figure}[t]
  \centering
  \subfloat[$\delta=10$s\label{subfig:rollernet_snap_10s}]{\includegraphics[width=0.9\columnwidth]{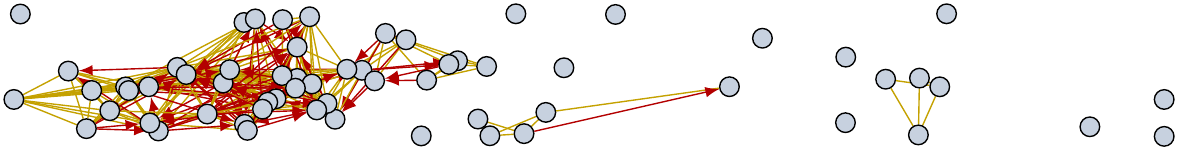}} \\
  \subfloat[$\delta=60$s\label{subfig:rollernet_snap_60s}]{\includegraphics[width=0.9\columnwidth]{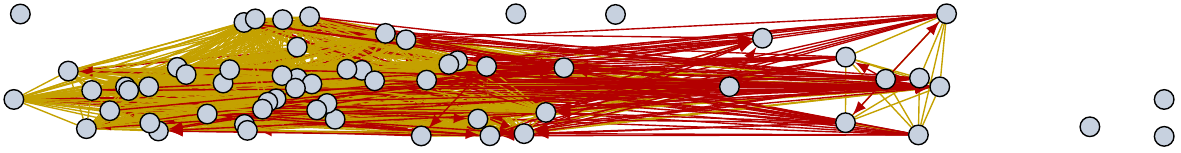}}
  \caption{Snapshots of rollernet's 10 and 60-seconds reachability graphs after 20 minutes ($\tau=5s$). The rollerblading tour is moving from left to right. The dark red arrows represent asymmetric arcs. The asymmetry is caused by the acceleration phase in the accordion phenomenon~\protect\cite{Tournoux2011}.  \label{fig:rollernet_snap}}
\end{figure}

\begin{figure*}[t]
  \centering
  \subfloat[Rollernet ($\tau=5$s)\label{subfig:rollernet_dynamic}]{\includegraphics{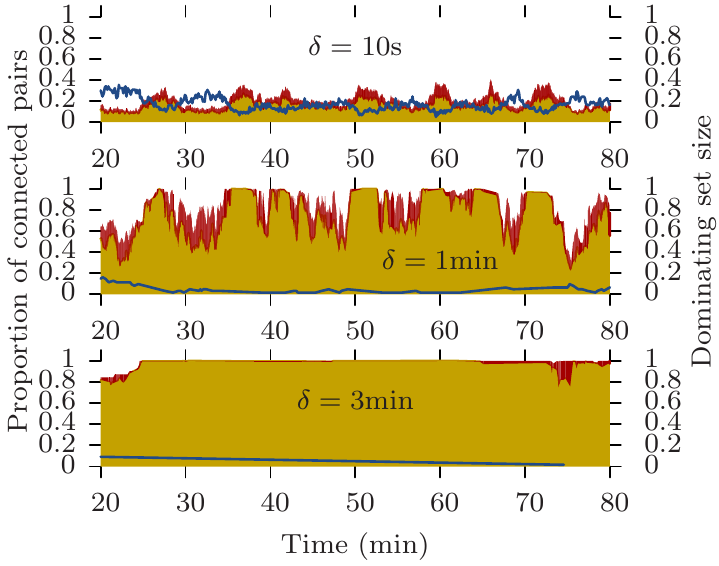}} \qquad
  \subfloat[Stanford ($\tau=1$s)\label{subfig:stanford_dynamic}]{\includegraphics{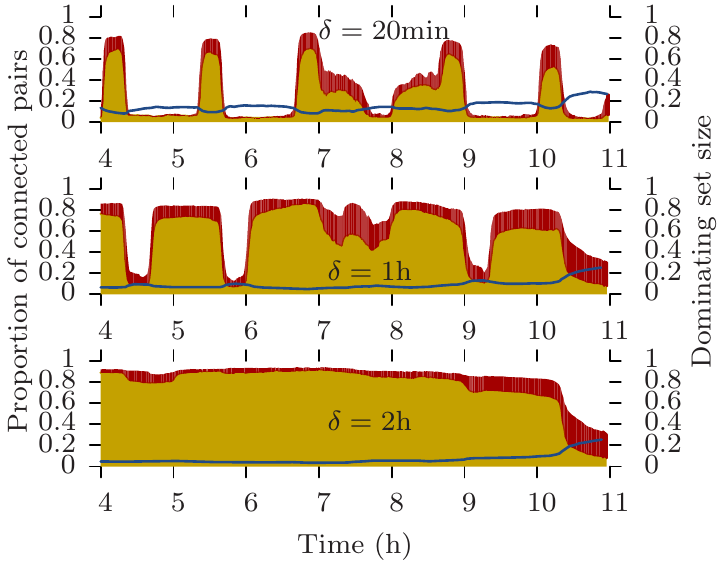}}
  \caption{Proportion of connected pairs of vertices over time for two real-life datasets. Connected pairs are divided into symmetric pairs (yellow) and asymmetric pairs (dark red).\label{fig:dynamic}}
\end{figure*}

The dynamics of reachability graphs highlight temporal structural properties of the original connectivity trace that are not otherwise accessible. Fig.~\ref{fig:stanford_snap} shows snapshots of a subset of the reachability graph in the high school network captured in the Stanford dataset. When the delay is 20 minutes, the classroom structure of the trace is clearly visible. In the original contact trace, edges within a classroom are unstable and lead to merges and splits of small connected components. In the reachability graph, a classroom is a stable complete subgraph with one teacher. A group of interacting teachers are also visible at the top of Fig.~\ref{subfig:stanford_snap_20min}. When some teachers and students later change classrooms, they create strong asymmetries in higher-delay reachability graphs such as the 40-minute one depicted in Fig.~\ref{subfig:stanford_snap_40min}.

Fig.~\ref{fig:rollernet_snap} shows snapshots of the reachability graph captured in the Rollernet dataset. These correspond to an acceleration phase, where the head of the tour (right on Fig.~\ref{fig:rollernet_snap}) pulls ahead of the rest of the rollerbladers. Due to the accordion phenomenon, the tail of the tour does not react immediately~\cite{Tournoux2011}. This completely prevents short delay communications between these two groups ($\delta=10$s on Fig.~\ref{subfig:rollernet_snap_10s}). With a longer delay (e.g., $\delta=60$s on Fig.~\ref{subfig:rollernet_snap_60s}), \textit{backward} communications towards the rear of the tour become possible (in particular through organizers who stop on the side of the road and let the tour pass them), thereby creating strong asymmetry in the reachability graph.

A more systematic study of dynamic properties over time is shown on Fig.~\ref{fig:dynamic}. It plots both the proportion of connected pairs of vertices (left axis) and the size of the dominating set (right axis). The pairs of vertices are further divided into symmetric pairs (yellow) and asymmetric pairs (red). At a given time, if the red histogram reaches 1 then a journey exists in at least one direction between all pairs of nodes. If the yellow histogram reaches 1, then a journey exist in both direction between all pairs of nodes.

For a 1-minute delay, the Rollernet reachability graph alternates, sometimes very rapidly, between fully connected states and highly asymmetric partially connected states (Fig.~\ref{subfig:rollernet_dynamic}). Therefore, any opportunistic communication system aiming for latencies under a minute will be strongly impacted by the accordion phenomenon. However, if a communication system can tolerate up to three minute delays, then it should be possible to smooth out the dynamic mobility. Indeed, the 3-minute reachability graph is fully connected for the entire duration of the trace, and the size of its dominating set is almost always equal to one.

The Stanford trace alternates static phases in classroom (the valleys on Fig.~\ref{subfig:stanford_dynamic}) and dynamic phases moving between classrooms or around the food court (the peaks on Fig~\ref{subfig:stanford_dynamic}). The progressive ``shrinking'' of the valleys illustrates how reachability graphs ``grow backwards'' with increasing delays. Indeed, if there exist a journey from $a$ to $b$ within 20 minutes at time $t$, then there exists a journey from $a$ to $b$ within 40 minutes at time $t$ minus $20$ minutes. Of course, after the increase in delays exceeds the width of the valleys, the reachability graph eventually reaches it maximum density. Note that in this case, an incompressible amount of asymmetry subsists throughout the trace.

Due to space constraints, we have only shown, in this section, results for specific values of $\tau$ but, as a general rule, the smaller the value of $\tau$, the faster the TRG becomes a clique. When analysing a network with a specific application in mind, the value of $\tau$ can be set to a realistic value (e.g. message size over bit-rate), and the plots in this section give an immediate visual understanding of the communication possibilites. The next section examines quantitatively the importance the delay ($\delta$) and the edge traversal time ($\tau$) parameters.

\subsection{Bounds on communication capabilities}
\label{subsec:communication_bounds}

\begin{figure*}[t]
  \centering
  \subfloat[Rollernet\label{subfig:rollernet_density}]{\includegraphics{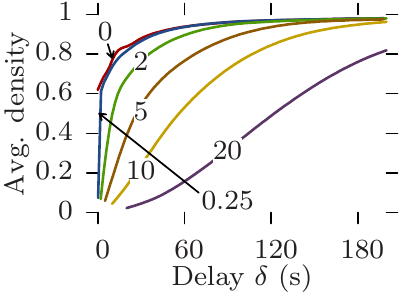}} \quad
  \subfloat[Stanford\label{subfig:stanford_density}]{\includegraphics{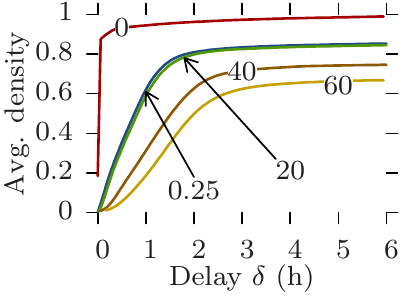}} \quad
  \subfloat[Random Waypoint\label{subfig:rwp_density}]{\includegraphics{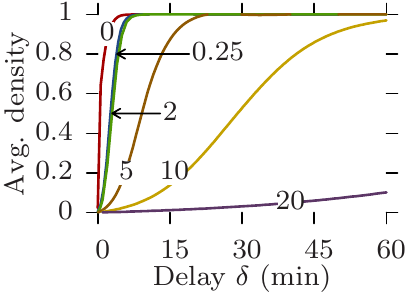}} \quad
  \subfloat[Community\label{subfig:community_density}]{\includegraphics{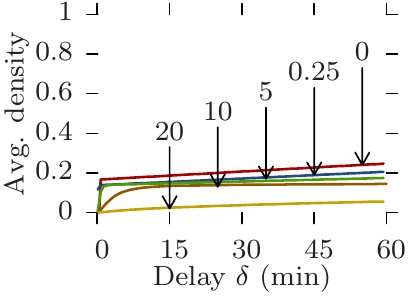}}
  \caption{Average density vs. maximum delay $\delta$ for different edge traversal times $\tau$ (in seconds).\label{fig:density}}
\end{figure*}

\begin{figure*}[t]
  \centering
  \subfloat[Rollernet\label{subfig:rollernet_ds_size}]{\includegraphics{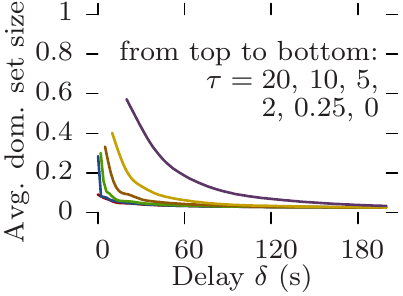}} \quad
  \subfloat[Stanford\label{subfig:stanford_ds_size}]{\includegraphics{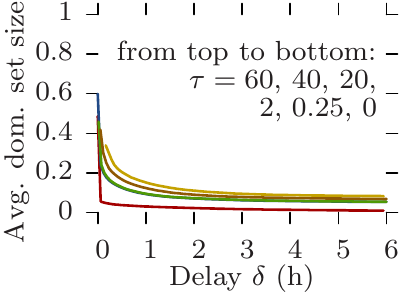}} \quad
  \subfloat[Random Waypoint\label{subfig:rwp_ds_size}]{\includegraphics{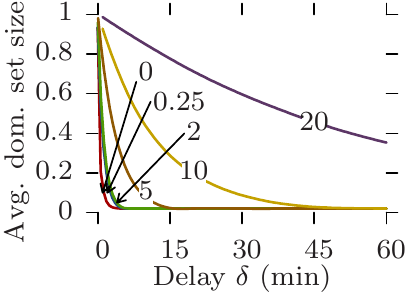}} \quad
  \subfloat[Community\label{subfig:community_ds_size}]{\includegraphics{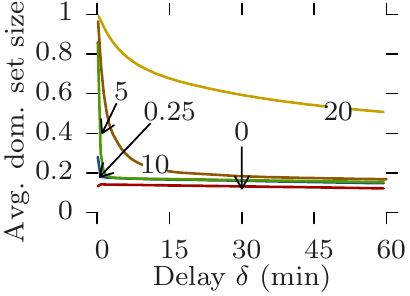}}
  \caption{Average dominating set size vs. maximum delay $\delta$ for different values of $\tau$ (in seconds).\label{fig:ds_size}}
\end{figure*}

Reachability graphs give straightforward bounds on communication capabilities. Indeed, the density at a given moment is exactly equal to the \textit{maximum delivery ratio expectancy} of a perfect opportunistic routing protocol whose delay-tolerance is equal to $\delta$ and whose message size over bit-rate ratio is equal to $\tau$. Furthermore, the size of the dominating set indicates the achievable \textit{offload ratio} in a scenario where the opportunistic network is assisting an infrastructure (e.g., 3G) for disseminating content to all nodes in the network~\cite{Han2011}.

Given real-world system requirements, i.e., wireless bit-rate estimates, messages sizes, target delivery ratio, and delay-tolerance, reachability graphs provide an immediate answer to the following question: can an opportunistic network support this service? If not, can it effectively supplement an infrastructure in an offloading scheme?

Fig.~\ref{fig:density} plots the average density against the delay tolerance for all datasets and for increasing edge traversal times, while Fig.~\ref{fig:ds_size} plots the average dominating set size. Rollernet and Random Waypoint share similar characteristics. Both are very sensitive to increasing edge traversal times. When these are close to one second, near 100\% density is achievable with a couple of minutes of delay tolerance (Figs.~\ref{subfig:rollernet_density}, \ref{subfig:rwp_density}, \ref{subfig:rollernet_ds_size}, and~\ref{subfig:rwp_ds_size}). In this case, they can support pure opportunistic communications. However, when edge traversal times are longer, tight delay constraints are impossible unless as part of an offloading scenario (e.g., $\tau=10$s and $\delta=60$s for Rollernet, Fig~\ref{subfig:rollernet_density}). For $\tau=20$s, Random Waypoint cannot even provide offloading for reasonable delays (Fig.~\ref{subfig:rwp_ds_size}).

Similarly, Stanford and Community share similar features. Regardless of the delay-tolerance and message size, no pure opportunistic routing protocol can provide anything near 100\% delivery ratio~\footnote{The $\tau=0$ results for Stanford are an artifact of the 20-second resolution.} (Figs.~\ref{subfig:stanford_density} and~\ref{subfig:community_density}). Despite this, they are both good offloading scenarios as the size of their dominating sets is consistently below 20\% of the total number of nodes, thereby offering potential offload ratios of around 80\%. Indeed, in order to disseminate content to the entire network, pushing one copy per classroom in the Stanford case, or one copy per community in the Community case, plus copies to single nodes is an obvious strategy. However, for larger values of $\tau$ (e.g., 20), Community is no longer able to offload content with reasonable delays (Fig.~\ref{subfig:community_ds_size}).

\subsection{Asymmetry}
\label{subsec:asymmetry}

\begin{figure*}
  \centering
  \subfloat[Rollernet\label{subfig:rollernet_asym}]{\includegraphics{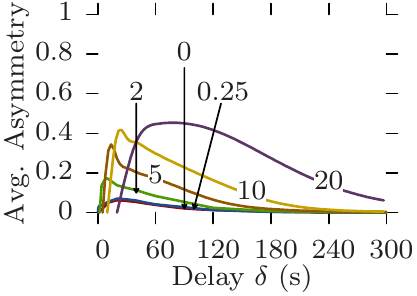}} \quad
  \subfloat[Stanford\label{subfig:stanford_asym}]{\includegraphics{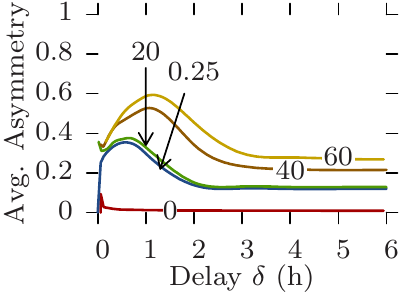}} \quad
  \subfloat[Random Waypoint\label{subfig:rwp_asym}]{\includegraphics{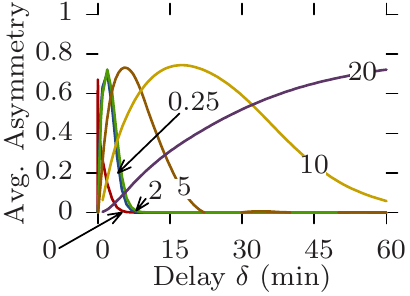}} \quad
  \subfloat[Community\label{subfig:community_asym}]{\includegraphics{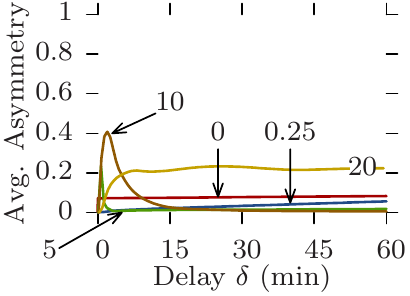}}
  \caption{Average asymmetry vs. maximum delay $\delta$ for different values of $\tau$ (in seconds).\label{fig:asym}}
\end{figure*}

Asymmetric communications are a fundamental aspect of opportunistic networks. Too strong focus on inter-contact times may lead to overlook asymmetry, but reachability graphs provide a natural way of studying and quantifying it. 

Fig.~\ref{fig:asym} plots for all datasets the average asymmetry against the maximum delay for increasing values of the edge traversal time. The asymmetry for the $\R_\tau$ graph directly derived from the original contact trace is always 0 as it only contains symmetric arcs (see Section~\ref{subsec:simple}). Asymmetry in Random Waypoint follows a regular pattern: a bell-shaped curve of constant max value, whose width and center increase with $\tau$ (Fig.~\ref{subfig:rwp_asym}). In this scenario, depending on the delay constraint, up to 80\% on average of connected pairs of vertices can only communicate in one direction. This is due to nodes traveling long straight distances creating asymmetric reachability to nodes they meet from nodes they had met earlier. As the maximum delay increases, return journeys using different intermediate nodes appear, the reachability becomes complete, and the asymmetry returns to 0. 

As previously, Rollernet and Random Waypoint show similar behavior. However, for Rollernet the maximum asymmetry also increases with $\tau$ (Fig.~\ref{subfig:rollernet_asym}). Indeed, as fewer edges are traversable, journeys become less likely between the front and the rear of the rollerblading tour. When such a journey is possible, it creates a longer-lasting asymmetry. In the Stanford trace, asymmetry follows the same increasing bell shape as in Rollernet, but never returns to 0 (Fig.~\ref{subfig:stanford_asym}). In fact, this incompressible \emph{minimum} asymmetry also increases with $\tau$.

\section{Related Work}
\label{sec:related}

Adaptations of traditional static graph distance metrics and algorithms to time-varying graphs have yielded many different concepts. For example, Orda \textit{et al.} propose a shortest path algorithm for TVGs based on different waiting policies (unrestricted, forbidden, and source waiting)~\cite{Orda1990}. Our work corresponds to the \textit{unrestricted} policy, in which a message may wait for an unlimited amount of time anywhere along its path through the TVG. Bui Xuan \textit{et al.} have proposed efficient algorithms for calculating shortest (in number of hops), fastest (in path traversal time), and foremost (i.e., earliest arrival) paths in TVGs~\cite{Xuan2003}. All these algorithms are designed to compute the shortest paths to all destination from a source and a fixed starting time. Our algorithms, in contrast, computes the reachibility graph by estimating the shortest paths for all possible starting times.

Several approaches to reachability in time-varying graphs exists. For strictly positive edge traversal times, a simple heuristic consists in dividing time into successive slots of length $\tau$ and keeping only edges that are persistently present during each slot~\cite{Daly2009}. This provides a good lower-bound approximation for small values of $\tau$ (i.e., less than $\eta$) whereas our approach can handle arbitrary edge traversal times. From a given starting time $t$, reachability among all pairs of nodes can be calculated by iterating over all edge UP/DOWN events~\cite{Tang2010}. This calculation can then be repeated for a sample of starting times~\cite{Holme2005}. This approach yields static reachability graphs for a discrete sequence of starting times, whereas the temporal reachability graphs defined in this paper calculate reachability in continuous time.

Chaintreau~\textit{et al.}, in their work on the diameter of opportunistic networks, calculate a \textit{Last Departure / Earliest Arrival} structure for each pair of nodes~\cite{chaintreau_diam}. This structure can tell for any pair of nodes $(A,B)$, at any time $t$, when is the earliest arrival for a message leaving $A$ for $B$ at time $t$. A reachability graph could be easily derived from these structures, but would unfortunately only cover the $\tau=0$ case. Our approach is more general since it can also handle non-zero edge traversal times.

\section{Conclusion and further work}
\label{sec:conclusion}

In this paper, we introduced the notion of \textit{temporal reachability graphs}. Given an edge traversal time and a maximum journey delay, temporal reachability graphs capture the temporal connectivity of the time-varying graphs they derive from. After formalizing the concept, we proved that regular reachability graphs, which encompass all experimental datasets, can be composed to compute reachability graphs of higher maximum delay with very high accuracy. Furthermore, we proposed a scalable highly-parallel streaming algorithm for their efficient computation. By applying this algorithm to synthetic and real-life contact traces, we showed how reachability graphs provide fresh new insights on temporal connectivity in time-varying graphs. In particular, they yield an immediate and intuitive picture of the communication capabilities and offloading potential of opportunistic networks. 

This work on reachability graphs will be pursued in several directions. Firstly, our results could be extended to the more general case where the edge traversal time $\tau$ is not constant but may take values among multiples of a TVG's resolution $\eta$. Secondly, as seen in this paper, reachability graphs seem to reveal community structures that are not immediately apparent in the contact traces. They could therefore lead to new approaches for the difficult problem of community detection in time-varying graphs. Finally, much work remains to be done on the statistical analysis and modeling of reachability graphs. For example, what are the correlations between symmetric and asymmetric arcs from a given node? How do degree and inter-arc time distributions evolve with edge traversal time and delay? If reachability graphs turn out to be easier to model than their underlying time-varying graphs, could they be used as a first step for realistic synthetic connectivity models?

\balance
\bibliographystyle{abbrv}

\begin{thebibliography}{10}

\bibitem{Alon1996}
N.~Alon, Y.~Matias, and M.~Szegedy.
\newblock The space complexity of approximating the frequency moments.
\newblock In {\em Proc. ACM STOC}, 1996.

\bibitem{Bhadra2002}
S.~Bhadra and A.~Ferreira.
\newblock {Computing multicast trees in dynamic networks using evolving
  graphs}.
\newblock Technical Report RR-4531, INRIA, 2002.

\bibitem{Casteigts2011}
A.~Casteigts, P.~Flocchini, W.~Quattrociocchi, and N.~Santoro.
\newblock Time-varying graphs and dynamic networks.
\newblock In {\em Ad-hoc, Mobile, and Wireless Networks}, volume 6811 of {\em
  Lecture Notes in Computer Science}, pages 346--359. Springer Berlin /
  Heidelberg, 2011.

\bibitem{Chaintreau2007}
A.~Chaintreau, P.~Hui, J.~Crowcroft, C.~Diot, R.~Gass, and J.~Scott.
\newblock Impact of human mobility on opportunistic forwarding algorithms.
\newblock {\em IEEE Transactions on Mobile Computing}, 6(6):606 --620, 2007.

\bibitem{chaintreau_diam}
A.~Chaintreau, A.~Mtibaa, L.~Massoulie, and C.~Diot.
\newblock The diameter of opportunistic mobile networks.
\newblock In {\em Proc. {ACM CoNEXT}}, 2007.

\bibitem{Clauset2007}
A.~Clauset and N.~Eagle.
\newblock {Persistence and Periodicity in a Dynamic Proximity Network}.
\newblock In {\em {Proc. DIMACS Workshop on Computational Methods for Dynamic
  Interaction Network}}, 2007.

\bibitem{Conan2007}
V.~Conan, J.~Leguay, and T.~Friedman.
\newblock Characterizing pairwise inter-contact patterns in delay tolerant
  networks.
\newblock In {\em Proc. {ACM Autonomics}}, 2007.

\bibitem{Daly2009}
E.~M. Daly and M.~Haahr.
\newblock Social network analysis for information flow in disconnected
  delay-tolerant manets.
\newblock {\em IEEE Transactions on Mobile Computing}, 8(5):606--621, May 2009.

\bibitem{pellegrini07}
F.~De~Pellegrini, D.~Miorandi, I.~Carreras, and I.~Chlamtac.
\newblock A graph-based model for disconnected ad hoc networks.
\newblock In {\em Proc. IEEE INFOCOM}, 2007.

\bibitem{Dorogovtsev02}
S.~N. Dorogovtsev and J.~F.~F. Mendes.
\newblock Evolution of networks.
\newblock {\em Advances In Physics}, 51(1079-1187):4, 2002.

\bibitem{dtn_fall_sigcomm}
K.~Fall.
\newblock A delay-tolerant network architecture for challenged internets.
\newblock In {\em Proc. {ACM SIGCOMM}}, 2003.

\bibitem{Han2011}
B.~Han, P.~Hui, V.~Kumar, M.~Marathe, J.~Shao, and A.~Srinivasan.
\newblock Mobile data offloading through opportunistic communications and
  social participation.
\newblock {\em IEEE Transactions on Mobile Computing}, 11(5):821 -- 834, 2011.

\bibitem{Harary97}
F.~Harary and G.~Gupta.
\newblock Dynamic graph models.
\newblock {\em Mathematical and Computer Modelling}, 25(7):79--88, 2005.

\bibitem{Holme2005}
P.~Holme.
\newblock Network reachability of real-world contact sequences.
\newblock {\em Phys. Rev. E}, 71:046119, Apr 2005.

\bibitem{Kempe2000}
D.~Kempe, J.~Kleinberg, and A.~Kumar.
\newblock Connectivity and inference problems for temporal networks.
\newblock In {\em Proc. ACM STOC}, 2000.

\bibitem{temporal_graphs}
V.~{Kostakos}.
\newblock Temporal graphs.
\newblock {\em Physica A Statistical Mechanics and its Applications},
  388:1007--1023, 2009.

\bibitem{laouiti2001}
A.~Laouiti, A.~Qayyum, and L.~Viennot.
\newblock {Multipoint relaying: An efficient technique for flooding in mobile
  wireless networks}.
\newblock In {\em Proc. {IEEE HICCS}}, 2001.

\bibitem{leboudec05}
J.-Y. Le~Boudec and M.~Vojnovic.
\newblock Perfect simulation and stationarity of a class of mobility models.
\newblock In {\em Proc. IEEE Infocom}, 2005.

\bibitem{musolesi05}
M.~Musolesi, S.~Hailes, and C.~Mascolo.
\newblock Adaptive routing for intermittently connected mobile ad hoc networks.
\newblock In {\em Proc. IEEE WoWMoM}, 2005.

\bibitem{Musolesi2006}
M.~Musolesi and C.~Mascolo.
\newblock A community based mobility model for ad hoc network research.
\newblock In {\em Proc. ACM REALMAN}, 2006.

\bibitem{Orda1990}
A.~Orda and R.~Rom.
\newblock Shortest-path and minimum-delay algorithms in networks with
  time-dependent edge-length.
\newblock {\em J. ACM}, 37:607--625, 1990.

\bibitem{Passarella2011}
A.~Passarella and M.~Conti.
\newblock Characterising aggregate inter-contact times in heterogeneous
  opportunistic networks.
\newblock In {\em Proc. {IFIP NETWORKING}}, 2011.

\bibitem{Salathe2010}
M.~Salath{\'e}, M.~Kazandjieva, J.~W. Lee, P.~Levis, M.~W. Feldman, and J.~H.
  Jones.
\newblock A high-resolution human contact network for infectious disease
  transmission.
\newblock {\em PNAS}, 107(51):22020--22025, 2010.

\bibitem{Tang2010}
J.~Tang, M.~Musolesi, C.~Mascolo, and V.~Latora.
\newblock Characterising temporal distance and reachability in mobile and
  online social networks.
\newblock {\em SIGCOMM Computer Communications Review}, 40(1):118--124, Jan.
  2010.

\bibitem{Tournoux2011}
P.-U. Tournoux, J.~Leguay, F.~Benbadis, V.~Conan, M.~Amorim, and J.~Whitbeck.
\newblock Density aware routing in highly dynamic dtns: the rollernet case.
\newblock {\em IEEE Transactions on Mobile Computing}, 10(12):1755--1768, 2011.

\bibitem{Wagner01}
A.~Wagner.
\newblock The yeast protein interaction network evolves rapidly and contains
  few redundant duplicate genes.
\newblock {\em Molecular Biology and Evolution}, 18(7):1283--1292, 2001.

\bibitem{Watts98}
D.~J. Watts and S.~H. Strogatz.
\newblock Collective dynamics of small-world networks.
\newblock {\em Nature}, 393(6684):440--442, 1998.

\bibitem{DITL}
J.~Whitbeck.
\newblock {DynamIc Trace Library}.
\newblock \href{http://github.com/neush/ditl}{http://github.com/neush/ditl}.

\bibitem{Xuan2003}
B.~B. Xuan, A.~Ferreira, and A.~Jarry.
\newblock Computing shortest, fastest, and foremost journeys in dynamic
  networks.
\newblock {\em International Journal of Foundations of Computer Science},
  14(2):267--285, 2003.

\end{thebibliography}

\pagebreak
\appendix

\section{Proofs}
\label{sec:proofs}

\subsection{Proof of Theorem~\ref{th:decomposition}}
\label{subsec:decomp_proof}

\begin{lemma}
\label{lemma:rd}
Let $\R_\delta$ be a reachability graph of a TVG $\G$ with edge traversal time $\tau$. Let $\J$ be a journey from $u$ to $v$ such that $departure(\J) \geq t$ and $arrival(\J) < t + \delta + \tau$. Then $\exists 0 \leq \epsilon < 1$ such that $(u,v) \in \R_{\delta+\epsilon\tau}(t)$.
\end{lemma}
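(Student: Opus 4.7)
\medskip

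The plan is to exhibit an explicit $\epsilon \in [0,1)$ that makes the journey $\J$ witness membership in $\R_{\delta+\epsilon\tau}(t)$. Recall that, by Definition~\ref{def:reachability_graph}, showing $(u,v) \in \R_{\delta+\epsilon\tau}(t)$ amounts to exhibiting a journey from $u$ to $v$ with $\text{\textsf{departure}} \geq t$ and $\text{\textsf{arrival}} \leq t + \delta + \epsilon\tau$. The journey $\J$ is already such a candidate for the departure condition by hypothesis, so the whole task reduces to choosing $\epsilon$ so that $\text{\textsf{arrival}}(\J) \leq t + \delta + \epsilon\tau$.

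Let $A = \text{\textsf{arrival}}(\J)$. I would split into two cases. First, if $A \leq t+\delta$, then $\epsilon = 0$ works immediately: $\J$ already certifies that $(u,v) \in \R_\delta(t) = \R_{\delta+0\cdot\tau}(t)$. Second, if $A > t+\delta$, then define
\[
\epsilon \;=\; \frac{A - t - \delta}{\tau}.
\]
By the case assumption $\epsilon > 0$, and by the hypothesis $A < t+\delta+\tau$ we get $\epsilon < 1$, so $\epsilon \in [0,1)$ as required. Moreover, by construction $t + \delta + \epsilon\tau = A = \text{\textsf{arrival}}(\J)$, so $\J$ satisfies both the departure and arrival conditions for an arc in $\R_{\delta+\epsilon\tau}(t)$.

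There is essentially no obstacle here: the lemma is really just a rescaling observation that a journey whose arrival lies within $\tau$ of the horizon $t+\delta$ can always be absorbed by enlarging $\delta$ by an appropriate fractional multiple of $\tau$. The only thing to be mildly careful about is the strict inequality $A < t+\delta+\tau$ in the hypothesis, which is exactly what is needed to ensure $\epsilon < 1$ (and not just $\epsilon \leq 1$); this matches the strict bound $0 \leq \epsilon < 1$ that later appears in the union in Theorem~\ref{th:decomposition}. The condition $u \neq v$ required by Definition~\ref{def:reachability_graph} is inherited from the fact that $\J$ is a (non-empty) journey from $u$ to $v$, so nothing further needs to be checked.
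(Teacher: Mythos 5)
Your proof is correct and follows exactly the same route as the paper's: a case split on whether $\text{\textsf{arrival}}(\J) \leq t+\delta$, taking $\epsilon = 0$ in the first case and choosing $\epsilon$ so that $\text{\textsf{arrival}}(\J) = t+\delta+\epsilon\tau$ in the second. You merely make explicit the verification that $\epsilon < 1$, which the paper leaves implicit.
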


\begin{proof}
If $arrival(\J) \leq t+\delta$, then we set $\epsilon=0$ and we have $(u,v) \in \R_\delta(t)$.

If $arrival(\J) > t+\delta$, let $0 \leq \epsilon < 1$ such that $arrival(\J) = t+\delta+\epsilon\tau$. In this case, $(u,v) \in \R_{\delta+\epsilon\tau}(t)$.
\end{proof}

\begin{lemma}
\label{lemma:rm}
Let $\R_\mu$ be a reachability graph of a TVG $\G$ with edge traversal time $\tau$. Let $\J$ be a journey from $u$ to $v$ such that $departure(\J) \geq t$ and $arrival(\J) \leq t + \mu$. Then $\mu \geq \tau$ and $\exists 0 \leq \epsilon < 1$ such that $(u,v) \in \R_{\mu-\epsilon\tau}(t+\epsilon\tau)$.
\end{lemma}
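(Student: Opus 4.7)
The lemma has two assertions: that $\mu \geq \tau$, and that there exists some $\epsilon \in [0,1)$ with $(u,v) \in \R_{\mu-\epsilon\tau}(t+\epsilon\tau)$. Both fall out almost immediately from Definitions~\ref{def:journey} and~\ref{def:reachability_graph}; in fact the simplest witness is $\epsilon = 0$. The existential phrasing is there only so that the lemma can be dovetailed with Lemma~\ref{lemma:rd} inside the proof of Theorem~\ref{th:decomposition}, rather than because finding $\epsilon$ is delicate.

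\textbf{Bounding $\mu$ below.} I would first observe that any journey $\J$ in the sense of Definition~\ref{def:journey} contains at least one edge ($k\geq 1$). Its last edge begins being crossed at $t_k$ and finishes at $t_k+\tau$, so $\text{\textsf{arrival}}(\J) = t_k+\tau \geq t_1+\tau = \text{\textsf{departure}}(\J)+\tau$. Chaining this with the hypotheses $\text{\textsf{departure}}(\J)\geq t$ and $\text{\textsf{arrival}}(\J)\leq t+\mu$ gives $\tau \leq \text{\textsf{arrival}}(\J)-\text{\textsf{departure}}(\J) \leq (t+\mu)-t = \mu$.

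\textbf{Choice of $\epsilon$.} I would simply take $\epsilon = 0$. Then $\R_{\mu-\epsilon\tau}(t+\epsilon\tau) = \R_\mu(t)$, and the hypotheses on $\J$ are exactly the conditions that Definition~\ref{def:reachability_graph} requires to witness $(u,v) \in \R_\mu(t)$; the condition $u\neq v$ is automatic in the intended application inside Theorem~\ref{th:decomposition}, where $(u,v)$ is already an arc of $\R_{\delta+\mu}$. Strictly speaking, any $\epsilon\in[0,1)$ with $\epsilon\tau \leq \text{\textsf{departure}}(\J)-t$ would also work, because the same journey $\J$ then satisfies $\text{\textsf{departure}}(\J)\geq t+\epsilon\tau$ and $\text{\textsf{arrival}}(\J)\leq (t+\epsilon\tau)+(\mu-\epsilon\tau)$.

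\textbf{Why the existential matters.} The only real subtlety is conceptual, not technical: understanding why the lemma is phrased with a free $\epsilon$ rather than as the bare statement $(u,v)\in\R_\mu(t)$. In the proof of Theorem~\ref{th:decomposition} one splits a journey of temporal length at most $\delta+\mu$ at an edge straddling time $t+\delta$; the first sub-journey arrives in $[t+\delta, t+\delta+\tau)$, which forces a \emph{specific} $\epsilon \in [0,1)$ via Lemma~\ref{lemma:rd}. The second sub-journey then departs at exactly $t+\delta+\epsilon\tau$ and arrives by $t+\delta+\mu$, so invoking Lemma~\ref{lemma:rm} with the \emph{same} $\epsilon$ (admissible because $\epsilon\tau$ equals $\text{\textsf{departure}}-(t+\delta)$ for this sub-journey) yields membership in $\R_{\mu-\epsilon\tau}(t+\delta+\epsilon\tau)$, producing precisely the right pair for the composition operator $\otimes$. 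I do not anticipate any obstacle in writing the argument itself out formally.
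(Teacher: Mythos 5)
Your proof is correct and follows essentially the same route as the paper: the bound $\mu \geq \tau$ is obtained identically from $\tau \leq \delta_{\J} \leq \mu$, and the existential is discharged by exhibiting a witness $\epsilon$ for which the given journey directly satisfies Definition~\ref{def:reachability_graph}. The only difference is in the choice of witness: the paper splits on whether $departure(\J) < t+\tau$, taking $\epsilon$ with $departure(\J)=t+\epsilon\tau$ in that case (and noting every $\epsilon\in[0,1)$ works otherwise), whereas you take $\epsilon=0$, which is an equally valid witness for the lemma as stated.
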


\begin{proof}
The temporal length of journey $\J$ is $\delta_{\J} \leq \mu$. Since it is a valid journey from $u$ to $v$, $\delta_{\J} \geq \tau$, hence $\mu \geq \tau$.

If $departure(\J) \geq t+\tau$, then $\delta_{\J} \leq \mu-\tau$, and $\forall 0 \leq \epsilon < 1$, $(u,v) \in \R_{\mu-\tau}(t+\tau) \subseteq \R_{\mu-\epsilon\tau}(t+\epsilon\tau)$.

If $departure(\J) < t+\tau$, then we set $0 \leq \epsilon < 1$ such that $departure(\J) = t+\epsilon\tau$. Then, since $arrival(\J) \leq t+\mu$, $\delta_{\J} \leq \mu-\epsilon\tau$ and $(u,v) \in \R_{\mu-\epsilon\tau}(t+\epsilon\tau)$.
\end{proof}

We can now prove Theorem~\ref{th:decomposition}.

\begin{proof}(Theorem~\ref{th:decomposition})
First, let us show that if an arc fits one of the decompositions then it belongs to $\R_{\delta+\mu}$. For all times $t$, let $(u,v)$ be an arc in $\left(\R_{\delta+\epsilon\tau} \otimes \R_{\mu-\epsilon\tau}\right)(t)$ ($0 \leq \epsilon < 1$). By definition, one of the following conditions holds.
\begin{enumerate}
\item $(u,v) \in \R_{\delta+\epsilon\tau}(t)$. Therefore $(u,v) \in \R_{\delta+\mu}(t)$ (Proposition~\ref{prop:growth}).
\item $(u,v) \in \R_{\mu-\epsilon\tau}(t+\delta+\epsilon\tau)$. In this case, there exists a journey $\J$ in $\G$ such that $departure(\J) \geq t+\delta+\epsilon\tau \geq t$ and $arrival(\J) \leq t + \delta +\epsilon\tau + \mu - \epsilon\tau \leq t + \delta + \mu$. Therefore, by definition, $(u,v) \in \R_{\delta+\mu}(t)$.
\item $\exists w \in V, s.t. (u,w) \in \R_{\delta+\epsilon \tau}(t) \text{ and } (w,v) \in \R_{\mu-\epsilon \tau}(t+\delta+\epsilon \tau)$. Let $\J$ and $\J'$ be the respective journeys from $u$ to $w$ and from $w$ to $v$. $\J \cup \J'$ is a journey from $u$ to $v$ such that $departure(\J \cup \J') = departure(\J) \geq t$ and $arrival(\J \cup \J) = arrival(\J') \leq t+\delta+\epsilon\tau+\mu-\epsilon\tau$, i.e. $arrival(\J \cup \J) \leq t + \delta + \mu$. Therefore $(u,v) \in \R_{\delta+\mu}(t)$.
\end{enumerate}

Conversely, let us show that for any time $t$ and any arc $(u,v) \in \R_{\delta+\mu}(t)$, there exists $0 \leq \epsilon < 1$, such that $(u,v) \in \left(\R_{\delta+\epsilon\tau} \otimes \R_{\mu-\epsilon\tau}\right)(t)$. Let $\J = \left\{(e_1,t_1), \dotsc, (e_k,t_k)\right\}$ be the journey from $u$ to $v$ such that $departure(\J) \geq t$ and $arrival(\J) \leq t+\delta+\mu$.  

Let $i = \max\left\{ 1 \leq j \leq k | t_j < t+\delta \right\}$.  There are three possible situations.
\begin{enumerate}
\item $i$ is not defined, i.e., $departure(\J) \geq t+\delta$. Since $arrival(\J) \leq t+\delta+\mu$, $\exists 0 \leq \epsilon <1$, such that $(u,v) \in \R_{\mu-\epsilon\tau}(t+\delta+\epsilon\tau)$ (Lemma~\ref{lemma:rm}).
\item $t_i=t_k$. Then $\J$ is a path from $u$ to $v$ such that $departure(\J) \geq t$ and $arrival(\J) <t+\delta+\tau$. Therefore $\exists 0 \leq \epsilon <1$, such that $(u,v) \in \R_{\delta+\epsilon\tau}(t)$ (Lemma~\ref{lemma:rd}).
\item $t_1 \leq t_i < t_{i+1} \leq t_k$. In this case, we can divide the journey $\J$ from $u$ to $v$ at time $t$ into the sub-journeys $\J_1 = \left\{ (e_1, t_1), \dotsc, (e_i,t_i)\right\}$ and $\J_2 = \left\{ (e_{i+1}, t_{i+1}), \dotsc, (e_k,t_k)\right\}$. We set $w = to(e_i) = from(e_{i+1})$. We have $arrival(\J_1) = t_i+\tau < t+\delta+\tau$. Therefore, $\exists 0 \leq \epsilon < 1$ such that $(u,w) \in \R_{\delta+\epsilon\tau}(t)$ (Lemma~\ref{lemma:rd}). Since $\J$ is a valid journey in $\G$, $t_{i+1} \geq t_i+\tau \geq t+\delta+\epsilon\tau$. Therefore $departure(\J_2) \geq t+\delta+\epsilon\tau$, and $(w,v) \in \R_{\mu-\epsilon\tau}(t+\delta+\epsilon\tau)$.
\end{enumerate}
In all three cases, $(u,v) \in \left(\R_{\delta+\epsilon\tau} \otimes \R_{\mu-\epsilon\tau}\right)(t)$.
\end{proof}

\subsection{Proof of Theorem~\ref{th:regular_reachability}}
\label{subsec:proof_regular_reachability}

\begin{lemma}[One-hop journeys]
\label{lemma:one_hop}
Let $\G$ be an $\eta$-regular TVG whose edge traversal time is $\tau \in \eta\N^*$. For $k \in \N$, if a one-hop journey $\{ (e,t) \}$ exists in $\G$ such that $ k\eta < t < (k+1)\eta$, then all one-hop journeys $\{ (e,t) \}$ with $ k\eta \leq t \leq (k+1)\eta$ also exist in $\G$.
\end{lemma}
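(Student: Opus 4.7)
The plan is to unpack the hypothesis via the presence function $\rho$ and then replay the same membership conditions for the shifted departure time $t'$, leveraging only the $\eta$-regularity.

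First I would fix $n\in\N^*$ with $\tau=n\eta$ and observe that the hypothesis $k\eta<t<(k+1)\eta$ together with the definition of a journey means $\rho(e,s)=1$ for all $s\in[t,\,t+\tau)$. Since $t+\tau\in\bigl((k{+}n)\eta,(k{+}n{+}1)\eta\bigr)$, this interval covers: a sub-interval of the interior of epoch $k$; the start $(k{+}j)\eta$ and the full interior of each epoch $k{+}j$ for $1\le j\le n-1$; and an initial sub-interval of epoch $k{+}n$ that includes its starting point $(k{+}n)\eta$ and at least one interior point. Applying the $\eta$-regularity on each such interior, I conclude that $e$ belongs to the (constant) edge set $\G(s)$ for every $s$ in the interior of epoch $j$, for each $j\in\{k,k{+}1,\dots,k{+}n\}$; write this as $e\in E_j^{\mathrm{int}}$. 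Applying the hypothesis directly at $s=(k{+}j)\eta$ for $1\le j\le n$ also gives $e\in\G((k{+}j)\eta)$. Finally, the inclusion $E_k^{\mathrm{int}}\subseteq\G(k\eta)$ (part of the regular TVG definition) hands me $e\in\G(k\eta)$ as well.

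Next I would take an arbitrary $t'$ with $k\eta\le t'\le (k{+}1)\eta$ and verify $\rho(e,s)=1$ for every $s\in[t',t'+\tau)$. Since $t'+\tau\le(k{+}n{+}1)\eta$, the interval $[t',t'+\tau)$ is a subset of $[k\eta,(k{+}n{+}1)\eta)$, which is precisely the union of the interiors of epochs $k,\dots,k{+}n$ together with their starting points $k\eta,(k{+}1)\eta,\dots,(k{+}n)\eta$. Each of these is handled by one of the membership facts collected in the previous paragraph, so $e\in\G(s)$ on the whole of $[t',t'+\tau)$, which is exactly what is needed for $\{(e,t')\}$ to be a valid one-hop journey.

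The only real subtlety, and the one I would take care with, is the bookkeeping at the epoch boundaries: ensuring that the endpoint $t+\tau$ lies strictly inside epoch $k{+}n$ (so that the hypothesis gives $e\in E_{k+n}^{\mathrm{int}}$ and hence, via regularity, $e\in\G((k{+}n)\eta)$), and symmetrically that after shifting to $t'=(k{+}1)\eta$ the new endpoint $t'+\tau=(k{+}n{+}1)\eta$ falls outside the required half-open interval, so no information about epoch $k{+}n{+}1$ is ever needed. Once the boundary cases $t'=k\eta$ and $t'=(k{+}1)\eta$ are handled, the interior values of $t'$ follow immediately from the regularity of epoch $k$ and epoch $k{+}n$.
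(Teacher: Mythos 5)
Your proof is correct and follows essentially the same route as the paper's: both establish that $e\in\G(\zeta)$ on the whole of $[k\eta,(k{+}n{+}1)\eta)$ by propagating the hypothesis through epoch interiors via $\eta$-regularity (including the inclusion $\G(t_1)\subseteq\G(k\eta)$ to cover the starting point of epoch $k$), and then observe that any shifted traversal window $[t',t'{+}\tau)$ fits inside that interval. Your version is if anything slightly more careful at the boundary $t'=(k{+}1)\eta$, which the lemma statement requires but the paper's proof writes with a strict inequality.
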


\begin{proof}
Let $\tau = n\eta$ with $n \in \N^*$. If $\J = \{ (e,t) \}$ is a valid journey in $t$, then by definition for all $ t \leq \zeta < t+\tau$, $e \in \G(\zeta)$. Since $k\eta < t < (k+1)\eta$ and $\G$ is $\eta$-regular, then for all $k\eta \leq \zeta < (k+n)\eta, e \in \G(\zeta)$ (Definition~\ref{def:regular}). Furthermore, since $arrival(\J) = t+\tau > (k+n)\eta$, there exists $0 \leq \epsilon < 1$ such that $(k+n)\eta < (k+n+\epsilon)\eta < (k+n+1)\eta$. In this case, $e \in \G\left((k+n+\epsilon)\eta\right)$. Therefore, since $\G$ is $\eta$-regular, for all $(k+n)\eta \leq \zeta < (k+n+1)\eta$, $e \in \G(\zeta)$ (Definition~\ref{def:regular}). Finally let $\hat{\J} = \{ (e, \hat{t}) \}$ be a one-hop journey in $\G$ with $k\eta \leq \hat{t} < (k+1)\eta$. We have $departure(\hat{\J}) \geq k\eta$ and $arrival(\hat{\J}) \leq (k+1+n)\eta$. Since for all $k\eta \leq \zeta < (k+1+n)\eta$, $e \in \G(\zeta)$, $\hat{\J}$ is a valid journey in $\G$.
\end{proof}

\begin{lemma}[Epoch inclusion]
\label{lemma:epoch_inclusion}
Let $\G$ be an $\eta$-regular TVG whose edge traversal time is $\tau \in \eta\N^*$. For $\delta \in \eta\N$, let $\R_\delta$ be a reachability graph of $\G$. For $k \in \N$, let times $t_a$ and $t_b$ be such that $ k\eta < t_a < (k+1)\eta$ and $k\eta \leq t_b \leq (k+1)\eta$. Then $\R_\delta(t_a) \subseteq \R_\delta(t_b)$.
\end{lemma}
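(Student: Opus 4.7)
The plan is to transform a witness journey for $(u,v) \in \R_\delta(t_a)$ into a witness for $(u,v) \in \R_\delta(t_b)$ by shifting each hop within its own epoch. Let $\J = \{(e_1, t_1), \ldots, (e_m, t_m)\}$ be such a witness, so $t_1 \geq t_a$ and $t_m + \tau \leq t_a + \delta$. Write $\tau = n\eta$ and $\delta = d\eta$, and let $k_i$ denote the epoch index of hop $i$, i.e.\ $k_i\eta \leq t_i < (k_i+1)\eta$. I will classify hop $i$ as \emph{interior} if $t_i > k_i\eta$ and as \emph{boundary} if $t_i = k_i\eta$; observe that whenever $k_1 = k$, hop $1$ must be interior, because $t_1 \geq t_a > k\eta$.

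Three arithmetic observations carry most of the argument. First, $k_{i+1} \geq k_i + n$ from $t_{i+1} \geq t_i + \tau$; this becomes strict, $k_{i+1} \geq k_i + n + 1$, whenever hop $i$ is interior and hop $i+1$ is boundary, since then $k_{i+1}\eta = t_{i+1} \geq t_i + \tau > (k_i+n)\eta$. Second, from $(k_m+n)\eta \leq t_m + \tau \leq t_a + \delta < (k+d+1)\eta$ together with the fact that the leftmost and rightmost quantities lie in $\eta\N$, we get $k_m\eta + \tau \leq k\eta + \delta$. Third, Lemma~\ref{lemma:one_hop} directly yields that for every interior hop $i$, $(e_i, t')$ is a valid one-hop journey for all $t' \in [k_i\eta, (k_i+1)\eta]$; for a boundary hop only $t' = k_i\eta$ is guaranteed.

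I then set $c := t_b - k\eta \in [0,\eta]$ and define $t_i' := k_i\eta + c$ when hop $i$ is interior, and $t_i' := k_i\eta$ when hop $i$ is boundary. Each $(e_i, t_i')$ is then a valid one-hop journey by the third observation. The spacing $t_{i+1}' - t_i' \geq \tau$ is verified by a four-case analysis on the interior/boundary types of consecutive hops; the only delicate case (interior followed by boundary) is handled precisely by the strict bound $k_{i+1} \geq k_i + n + 1$, which absorbs the shift $c$ applied to hop $i$. Departure $t_1' \geq t_b$ holds because either $k_1 = k$ with hop $1$ interior (so $t_1' = k\eta + c = t_b$) or $k_1 \geq k+1$ (so $t_1' \geq (k+1)\eta \geq t_b$). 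Arrival $t_m' + \tau \leq t_b + \delta$ follows from $k_m\eta + \tau \leq k\eta + \delta$ together with $c \leq \eta$. The main obstacle is the rigidity of boundary hops, whose times cannot be shifted at all; the whole reason the construction works is that an interior-then-boundary transition is forced to come with one extra epoch of gap, which is exactly what accommodates the $c$-shift on the preceding interior hop.
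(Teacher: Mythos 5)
Your proof is correct, and it takes a genuinely different route from the paper's. The paper splits into the two cases $t_a < t_b$ and $t_a > t_b$; in each it locates a pivot index (the first hop already late enough, respectively the last hop already early enough), rigidly repacks the prefix (resp.\ suffix) onto the tightly spaced schedule $t_b + (j-1)\tau$ (resp.\ $t_b + \delta - (k-j+1)\tau$ counting from the arrival), and leaves the other half of the journey untouched; each re-timed hop is validated by Lemma~\ref{lemma:one_hop} because it stays in its original epoch. You instead apply a single uniform construction: shift every \emph{interior} hop by the fixed offset $c = t_b - k\eta$ within its own epoch and freeze every \emph{boundary} hop, which handles both directions of the move from $t_a$ to $t_b$ at once and needs no pivot. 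The price is the interior/boundary dichotomy and the key arithmetic fact that an interior-to-boundary transition forces $k_{i+1} \geq k_i + n + 1$, which is exactly the extra epoch of slack that absorbs the $c$-shift; I checked the four spacing cases, the departure bound, and the arrival bound $k_m\eta + \tau \leq k\eta + \delta$, and they all go through (the $\delta = 0$ case is vacuous since $\R_0$ is empty when $\tau \geq \eta > 0$). What your version buys is a cleaner, symmetric argument with no case split on the sign of $t_b - t_a$ and a more transparent explanation of \emph{why} boundary hops --- the rigid, $0$-duration-edge artifacts that motivated the regularity definition in the first place --- do not obstruct the nudging; what the paper's version buys is that it only ever re-times a contiguous prefix or suffix, which matches more directly the prefix/suffix decompositions used later in the proofs of Theorems~\ref{th:decomposition} and~\ref{th:sampling}.
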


\begin{proof}
For $\delta=0$, $\R_\delta$ is always empty and the lemma is trivially true. Hereafter we write $\delta = d \eta$, with $d\in\N^*$. If an arc $(u,v)$ is in $\R_\delta(t_a)$, then there exists a journey $\J = \left\{ (e_1,t_1), \dotsc, (e_k,t_k) \right\}$ from $u$ to $v$ such that $t_1 \geq t_a$ and $t_k +\tau \leq t_a+\delta$.

First we assume that $t_a < t_b$. 

Let $i = \min\left\{ j | t_j \geq t_b + (j-1)\tau \right\}$. 

If $i$ is not defined then $\forall j, t_a + (j-1)\tau \leq t_j < t_b+(j-1)\tau$. We consider the journey $\hat{\J} = \left\{ (e_j, \hat{t_j} \right\}_{1 \leq j \leq k}$, such that $\hat{t_j} = t_b+(j-1)\tau$. By construction, the departure times of all of its one-hop journeys verify $\hat{t_j}-\hat{t}_{j-1} \geq \tau$. Furthermore, since $k\eta + (j-1)n\eta < t_j < \hat{t_j} < (k+1)\eta + (j-1)n\eta$, all one-hop journeys $\{ (e_j, \hat{t_j} ) \}$ are valid (Lemma~\ref{lemma:one_hop}). Therefore $\hat{\J}$ is a valid journey from $u$ to $v$ such that $departure(\hat{\J}) \geq t_b$ and $arrival(\hat{\J}) \leq t_b+\delta$, i.e., $(u,v) \in \R_\delta(t_b)$.

If $i$ is defined then there are two possibilities. If $i=1$ then $departure(\J) \geq t_b$ and $(u,v)$ is trivially in $\R_\delta(t_b)$. If $i>1$, then we can divide journey $\J$ into two subjourneys $\J_1 = \left\{ (e_j,t_j) \right\}_{0 \leq j < i}$ and $\J_2 = \left\{ (e_j,t_j) \right\}_{i \leq j \leq k}$. As above, we can transform $\J_1$ into a valid journey $\hat{\J_1} = \left\{ (e_j,\hat{t_j} \right\}_{0 \leq j < i}$ with $\hat{t_j} = t_b + (j-1)\tau$ such that $departure(\hat{J_1}) \geq t_b$ and $arrival(\hat{J_1}) \leq t_b+(i-1)\tau$. Since $departure(\J_2) \geq t_b + (i-1)\tau$, $\hat{\J_1} \cup \J_2$ is a valid journey from $u$ to $v$ that leaves after $t_b$ and arrives before $t_b+\delta$. Therefore $(u,v) \in \R_\delta(t_b)$.

Conversely, let us assume that $t_a > t_b$. Let $i = \max\left\{j | t_j + \tau \leq t_b + \delta - (k-j)\tau \right\}$.

If $i$ is not defined, then $\forall j, t_b+\delta+(k-j)\tau < t_j + \tau \leq t_a + \delta + (k-j)\tau$. We consider the journey $\hat{\J} = \left\{ (e_j, \hat{t_j} \right\}_{1 \leq j \leq k}$, such that $\hat{t_j} = t_a+\delta+(k-j-1)\tau$. By construction, the departure times of all of its one-hop journeys verify $\hat{t_j}-\hat{t}_{j-1} \geq \tau$. Furthermore, since $(k+d)\eta+ (k-j-1)n\eta < \hat{t_j} < t_j < (k+d+1)\eta + (k-j-1)n\eta$, all one-hop journeys $\{ (e_j, \hat{t_j} ) \}$ are valid (Lemma~\ref{lemma:one_hop}). Therefore $\hat{\J}$ is a valid journey from $u$ to $v$ such that $departure(\hat{\J}) \geq t_b$ and $arrival(\hat{\J}) \leq t_b+\delta$, i.e., $(u,v) \in \R_\delta(t_b)$.

If $i$ is defined then there are two possibilities. If $i=k$ then $arrival(\J) \leq t_b+\delta$ and $(u,v)$ is trivially in $\R_\delta(t_b)$. If $i<k$, then we can divide journey $\J$ into two subjourneys $\J_1 = \left\{ (e_j,t_j) \right\}_{0 \leq j \leq i}$ and $\J_2 = \left\{ (e_j,t_j) \right\}_{i < j \leq k}$. As above, we can transform $\J_2$ into a valid journey $\hat{\J_2} = \left\{ (e_j,\hat{t_j} \right\}_{i < j \leq k}$ with $\hat{t_j} = t_b + \delta+(k-j-1)\tau$ such that $departure(\hat{J_2}) \geq t_b+\delta+(k-i)\tau$ and $arrival(\hat{J_2}) \leq t_b+\delta$. Since $arrival(\J_1) \leq t_b+(k-i)\tau$, $\J_1 \cup \hat{\J_2}$ is a valid journey from $u$ to $v$ that leaves after $t_b$ and arrives before $t_b+\delta$. Therefore $(u,v) \in \R_\delta(t_b)$.
\end{proof}

We can now prove Theorem~\ref{th:regular_reachability}.

\begin{proof}
For $\delta=0$, $\R_\delta$ is always empty and the lemma is trivially true. Hereafter $\delta \in \eta\N^*$. For all $k\in\N$, and for all $k\eta \leq t_1 < (k+1)\eta$,  $\R_\delta(t_1) \subseteq \R_\delta(k\eta)$ and $\R_\delta(t_1) \subseteq \R_\delta\left((k+1)\eta)\right)$ (Lemma~\ref{lemma:epoch_inclusion}). For times $t_2$ such that $k\eta < t_2 < (k+1)\eta$, we have $\R_\delta(t_1) \subseteq \R_\delta(t_2)$ and $\R_\delta(t_2) \subseteq \R_\delta(t_1)$ (Lemma~\ref{lemma:epoch_inclusion}). Therefore $\R_\delta(t_1) = \R_\delta(t_2)$ and $\R_\delta$ is $\eta$-regular.
\end{proof}

\subsection{Proof of Theorem~\ref{th:sampling}}
\label{subsec:proof_sampling}

\begin{lemma}[Journeys in regular graphs]
\label{lemma:journeys_regular}
Let $\G$ be an $\eta$-regular TVG whose edge traversal time is $\tau \in \eta\N^*$. Let $\J$ be a journey in $\G$ from $u$ to $v$ such that $\exists (a,b) \in \N\times\N$, $departure(\J) \geq a\eta$ and $arrival(\J) < (b+1)\eta$. Then there exists another journey $\hat{\J}$ from $u$ to $v$ that verifies $departure(\J) \geq a\eta$ and $arrival(\J) \leq b\eta$.
\end{lemma}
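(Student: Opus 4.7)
The plan is to construct $\hat{\J}$ by ``snapping'' the departure time of each hop down to the start of the epoch containing it. Let $n \in \N^*$ be such that $\tau = n\eta$, and write the original journey as $\J = \{ (e_1,t_1), \dotsc, (e_k,t_k) \}$. For each $j$, set $c_j = \lfloor t_j / \eta \rfloor$ and define $\hat{t}_j = c_j \eta$. This gives the candidate journey $\hat{\J} = \{ (e_1, \hat{t}_1), \dotsc, (e_k, \hat{t}_k) \}$.

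I would then verify four properties in sequence. (i) \textbf{Validity of each one-hop journey.} If $t_j > c_j \eta$ strictly, Lemma~\ref{lemma:one_hop} directly yields that $\{(e_j, c_j \eta)\}$ is a valid one-hop journey. If $t_j = c_j \eta$, nothing needs to be shifted. (ii) \textbf{Spacing.} From $t_{j+1} \geq t_j + \tau \geq c_j \eta + n\eta = (c_j + n)\eta$ and the fact that $c_j + n \in \N$, we get $c_{j+1} = \lfloor t_{j+1}/\eta \rfloor \geq c_j + n$, so $\hat{t}_{j+1} - \hat{t}_j \geq n\eta = \tau$. (iii) \textbf{Departure.} Since $t_1 \geq a\eta$ and $a \in \N$, we have $c_1 \geq a$, hence $\hat{t}_1 = c_1 \eta \geq a\eta$. (iv) \textbf{Arrival.} From $t_k + \tau < (b+1)\eta$ we get $t_k < (b+1-n)\eta$ and therefore $c_k \leq b-n$, so $\hat{t}_k + \tau = (c_k + n)\eta \leq b\eta$. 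Taking these together, $\hat{\J}$ is a valid journey from $u$ to $v$ with $\text{\textsf{departure}}(\hat{\J}) \geq a\eta$ and $\text{\textsf{arrival}}(\hat{\J}) \leq b\eta$.

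The main obstacle is the validity check at step (i) when $t_j$ happens to coincide exactly with an epoch boundary, since Lemma~\ref{lemma:one_hop} assumes strict interiority of the departure time. I would handle this as a separate trivial case ($\hat{t}_j = t_j$, no shift needed), so that Lemma~\ref{lemma:one_hop} is only invoked where its hypotheses are satisfied. A secondary point of care is that the snap-down must respect both the lower bound $a\eta$ and the spacing constraint simultaneously; both are guaranteed by integer-valuedness of $a$, $b$, and $n$ combined with the original journey's constraints, which is why the ``snap to epoch start'' construction is natural rather than an arbitrary time shift.
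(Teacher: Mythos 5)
Your proof is correct and follows essentially the same route as the paper: both snap each hop's departure time down to the start of its epoch ($\hat{t}_j = \lfloor t_j/\eta\rfloor\,\eta$) and then verify one-hop validity, the $\tau$-spacing between consecutive hops, and the departure/arrival bounds. The only cosmetic difference is that you invoke Lemma~\ref{lemma:one_hop} (with the boundary case handled separately) where the paper appeals directly to the $\eta$-regularity definition, and your integer-floor manipulations are slightly tidier.
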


\begin{proof}
We write $\tau = n\eta$ with $n \in \N^*$. If $b \leq a + n$, then $arrival(\J)-departure(\J) < \tau$ which is impossible. Therefore $b > a+n$.

$\J = \left\{ (e_1,t_1), \dotsc, (e_k,t_k) \right\}$ is a journey from $u$ to $v$. For all $i$, we note $c_i$ the integer such that $c_i\eta \leq t_i < (c_i+1)\eta$. Here $a=c_1$ and $b=c_k+n$. Indeed, we have $arrival(\J)=t_k+\tau<(c_k+n+1)\eta$. Then we define $\hat{\J} = \left\{ (e_1,c_1\eta), \dotsc, (e_k,c_k\eta) \right\}$.

Firstly, since $\J$ is a valid journey, for all $t_i \leq t < t_i + \tau, e_i \in \G(t)$. Because $\G$ is $\eta$-regular, this implies that for all $c_i\eta \leq t < c_i + \tau, e_i \in \G(t)$, and therefore all one-hop journeys $\{(e_i,c_i\eta)\}$ in $\hat{\J}$ are valid.

Secondly, for all $i$, $t_{i+1} - t_i \geq \tau$ leads to $(c_{i+1}-c_i)\eta+\eta > \tau$, i.e., $(c_{i+1}-c_i)+1 > n$. Since these are all integers, we get $c_{i+1}-c_i \geq n$ and all the one-hop journeys in $\hat{\J}$ may be taken successively and $\hat{\J}$ is therefore a valid journey from $u$ to $v$ such that $departure(\hat{\J})\geq a\eta$ and $arrival(\hat{\J}) \leq (c_k+n)\eta = b\eta$.
\end{proof}

We can now prove Theorem~\ref{th:sampling}.

\begin{proof}
We write $\tau = n\eta$ with $n \in \N^*$. At time $t=a\eta$, let $(u,v)$ be an arc in $\R_{(d+k)\eta}\otimes \R_{(m-k)\eta}$ with $0 \leq k < n$.  By setting $\epsilon=k/n$, $(u,v) \in \R_{(d+m)\eta}(a\eta)$ with $0 \leq \epsilon < 1$ (Theorem~\ref{th:decomposition}).

Conversely, let us consider an $(u,v) \in \R_{(d+m)\eta}(a\eta)$. Since $m\eta \geq \tau$,according to Theorem~\ref{th:decomposition}, there exists $0 \leq \epsilon < 1$ such that one of the following holds (we note $k$ the integer such that $k\eta \leq \epsilon n \eta < (k+1)\eta$).
\begin{itemize}
\item $(u,v) \in \R_{d\eta+\epsilon\tau}(a\eta)$. Therefore $(u,v) \in \R_{(d+k)\eta}(a\eta)$ (Lemma~\ref{lemma:journeys_regular}).
\item $(u,v) \in \R_{m\eta-\epsilon\tau}\left((a+d)\eta+\epsilon\tau\right)$. With the inclusion rule from Proposition~\ref{prop:growth}, $(u,v)$ belongs to\\$\R_{(m-k)\eta}\left((a+d)\eta+\epsilon n \eta\right)$. Furthermore, since $\R_{(m-k)\eta}$ is $\eta$-regular (Theorem~\ref{th:regular_reachability}), $(u,v) \in \R_{(m-k)\eta}\left((a+d+k)\eta\right)$.
\item $\exists w \in V, (u,w) \in \R_{d\eta+\epsilon \tau}(a\eta)$ and\\$(w,v) \in \R_{m\eta-\epsilon \tau}\left((a+d)\eta+\epsilon \tau\right)$. Following the same reasoning as for the first two conditions, we show that $\exists w \in V, (u,w) \in \R_{(d+k)\eta}(a\eta)$ and $(w,v) \in \R_{(m-k)\eta}\left((a+d+k)\eta\right)$.
\end{itemize}
In all three cases, there exists $0 \leq k < n$ such that $(u,v) \in \R_{(d+k)\eta}\otimes \R_{(m-k)\eta}$.
\end{proof}

\subsection{Proof of Proposition~\ref{prop:lower_epoch_approx}}
\label{subsec:proof_lower_epoch_approx}

\begin{proof}
For $d < 2n$, by definition $\Low_{d\eta} = \R_{d\eta}$. By induction, lets us suppose
the proposition true until a certain $l \geq 2n-1$, and show that it also holds for $l+1$.  Let $(u,v)$ be an arc in $\Low_{(l+1)\eta}(t)$. We write $a$ the integer such that $a\eta \leq t < (a+1)\eta$. If $t=a\eta$, then $\Low_{(l+1)\eta}(t) = \R_{(l+1)\eta}(t)$ (Definition~\ref{def:approximate_composition} and Theorem~\ref{th:sampling}) and $(u,v) \in \R_{(l+1)\eta}(t)$. Hereafter we assume that $t > a\eta$. Since $l+1 \geq 2n$, there exists $(d,m) \in N$ such that $n \leq d \leq l$, $n \leq m \leq l$, and $l+1 = d+m$. In this case, by definition, there exists $0 \leq k < n$ such that one of the following holds.
\begin{itemize}
\item $(u,v) \in \Low_{(d+k)\eta}\left((a+1)\eta\right)$. By setting $\epsilon = \frac{(a+1)\eta-t+k\eta}{n\eta}$, one can verify that $\exists 0 \leq \epsilon < 1$, such that $(u,v) \in \R_{d\eta+\epsilon \tau}(t)$.
\item $(u,v) \in \Low_{(m-k)\eta}\left((a+d+k)\eta\right)$. By setting $\epsilon = \frac{t-a\eta+k\eta}{n\eta}$, one can verify that $\exists 0 \leq \epsilon < 1$, such that $(u,v) \in \R_{m\eta-\epsilon \tau}(t+d\eta+\epsilon\tau)$.
\item $\exists w \in V, (u,w) \in \Low_{d\eta}(t)$ and $(w,v) \in \Low_{m\eta}(t+d\eta)$. By induction, $(u,w) \in \R_{d\eta}(t)$ and $(w,v) \in \R_{m\eta}(t+d\eta)$.
\end{itemize}
With Theorem~\ref{th:decomposition}, in all three cases, $(u,v) \in \R_{(l+1)\eta}$. By induction, this shows that $\forall d \in \N, d, \Low_{d\eta} \subseteq \R_{d\eta}$.
\end{proof}

\end{document}